\documentclass[prerint,11pt]{article}
\usepackage{setspace}
\usepackage{appendix}

\usepackage{xfrac}
\usepackage{amssymb}

\usepackage[figuresright]{rotating}


\usepackage{amsmath,amssymb}
\usepackage{graphicx}
\usepackage{dcolumn}
\usepackage{bm}
\usepackage{amscd,amsthm}
\usepackage{ifthen}
\usepackage{mathtools}

\usepackage{pgf,tikz}
\usepackage{pgfplots}
\usetikzlibrary{calc,shadows}
\usetikzlibrary{pgfplots.groupplots}
\usepackage{subfigure}
\usepackage{url}
\usetikzlibrary{pgfplots.groupplots}

\usetikzlibrary{external}
\tikzexternalize[prefix=./]
\tikzset{external/force remake}

\usepackage{amsthm}
\newtheorem{lemma}{Lemma}
\newtheorem{definition}{Definition}

\newtheorem{theorem}{Theorem}

\newtheorem{proposition}{Proposition}

\newtheorem*{remark*}{Remark}
 \newcommand{\ve}[1]{\mathbf{#1} }

 \newcommand{\mc}[0]{\mathcal }
 \newcommand{\mb}[0]{\mathbb }
 \newcommand{\mbf}[0]{\mathbf }

\newcommand\norm[2][\Tnorm]{\ensuremath{{\left\Vert #2 \right\Vert}_{#1}}}

\newcommand\Tinnerprod{}
\newcommand{\innerprod}[3][\Tinnerprod]{\ifthenelse{\equal{#1}{}}{\ensuremath{\left<#2,#3\right>}}{\ensuremath{\left<#2,#3\right>_{#1}}}}

\newcommand\vect[1]{\mathbf #1}


\newcommand{\range}{\mc R}
\newcommand{\area}{\mc A}



\newcommand{\safemath}[2]{\newcommand{#1}{\ensuremath{#2}}}
\safemath{\complexset}{\mathbb{C}}


\newcommand\defeq{\coloneqq}


\newcommand{\Ltwo}[0]{\mathrm{L}^2} 


\newcommand\Tex{}
\newcommand\Ex[2][\Tex]{%
\ifthenelse{\equal{#1}{}}{{\mathbb E}[#2]}{\ensuremath{\underset{#1}{\mathbb E}\left[ #2\right]}}}


\DeclareMathOperator{\supp}{supp}
\DeclareMathOperator{\sinc}{sinc}
\DeclareMathOperator{\spanof}{span}
\DeclareMathOperator{\rnk}{rank}


\newcommand{\pinv}[1]{  {#1}^{ \dagger } } 


\newcommand{\conj}[1]{ {#1}^* } 
\newcommand{\herm}[1]{{#1}^H} 
\newcommand{\transp}[1]{{#1}^T} 
\newcommand{\rank}[1]{ \rnk\! \left({#1}\right) } 
\newcommand{\mtx}[1]{\mathbf #1}

\newcommand\EX[2][\Tex]{
\ifthenelse{\equal{#1}{}}{{\mathbb E}[#2]}{\ensuremath{\underset{#1}{\mathbb E}\left[ #2\right]}}}

\newcommand{\coh}[1]{%
  \ifthenelse{\isempty{#1}}%
    {\mu }
    {\mu (#1)}
}


\DeclareMathOperator{\spark}{spark}
\newcommand{\Un}[0]{ U }  
\newcommand{\SSF}[0]{ \Delta  }  
\newcommand{\opfrom}[0]{ X } 
\newcommand{\opto}[0]{  Y } 
\newcommand{\sfunc}{s_{\hspace{-0.02cm}H}} 
\newcommand{\csamp }[0]{T} 


\newcommand{\opclassgen}[0]{\ensuremath{ \mathcal Q}} 
\newcommand{\opclass}[0]{\ensuremath{ \mathcal H}}  

    \newcommand{\opclassr}[0]{\ensuremath{ \mathcal H_{M_\Gamma}  }}    
    \newcommand{\opclassbr}[0]{ \mathcal X (\SSF)  }                    


    \newcommand{\prsig}[0]{x}  


\newcommand{\jpg}{\mc C\mc N}

\renewcommand{\i}{i} 				
\newcommand{\I}{M} 				


\newcommand{\sfuncm}[1]{s_{\hspace{-0.02cm}H_{#1}}}


%
%
%
%
%
%
%
%
%

\newcommand{\sfuncspace}{S}

\usepackage{hyperref}
\hypersetup{
    colorlinks=false,       
    linkcolor=red,          
    citecolor=green,        
    filecolor=magenta,      
    urlcolor=cyan           
}

\setlength{\oddsidemargin}{0pt}
\setlength{\evensidemargin}{0pt}
\setlength{\textwidth}{6.5in}
\setlength{\topmargin}{-0.3in}
\setlength{\textheight}{8.89in}


\begin{document}

\title{Identification of Sparse Linear Operators}

\renewcommand\footnotemark{}

\author{Reinhard Heckel and Helmut B\"olcskei\thanks{Part of this paper was presented at the 2011 IEEE International Symposium on Information Theory (ISIT) \cite{heckel_compressive_2011}.}
 \\[0.5em]
  \multicolumn{1}{p{.7\textwidth}}{\centering Dept. of IT \& EE, ETH Zurich, Switzerland}}

%
%

\maketitle

\renewcommand{\sfrac}[2]{#1/#2}

\begin{abstract}
We consider the problem of identifying a linear deterministic operator from its response to a given probing signal. For a large class of linear operators, we show that stable identifiability is possible if the total support area of the operator's spreading function satisfies $\SSF \leq \sfrac{1}{2}$. This result holds for an arbitrary (possibly fragmented) support region of the spreading function, does not impose limitations on the total extent of the support region, and, most importantly, does not require the support region to be known prior to identification. Furthermore, we prove that stable identifiability of \emph{almost all operators} is possible if $\SSF < 1$. This result is surprising as it says that there is no penalty for not knowing the support region of the spreading function prior to identification. Algorithms that provably recover all operators with $\SSF \leq \sfrac{1}{2}$, and almost all operators with $\SSF < 1$ are presented. 
\end{abstract}


\section{Introduction}
The identification of a deterministic linear operator from the operator's response to a probing signal is an important problem in many fields of engineering. Concrete examples include system identification in control theory and practice, the measurement of dispersive communication channels, and radar imaging. 
It is natural to ask under which conditions (on the operator) identification is possible, in principle, and how one would go about choosing the probing signal and extracting the operator from the corresponding output signal. This paper addresses these questions by considering the (large) class of 
linear operators that can be represented as a continuous weighted superposition of time-frequency shift operators, i.e., the operator's response to the signal $\prsig(t)$ can be written as
\begin{equation}
y(t) = \int_\tau \int_\nu  \sfunc (\tau,\nu) \prsig(t-\tau)  e^{j 2\pi \nu t}  d\nu d\tau
\label{eq:ltvsys}
\end{equation}
where $\sfunc(\tau,\nu)$ denotes the spreading function associated with the operator. 
The representation theorem \cite[Thm. 14.3.5]{groechenig_foundations_2001} states that the action of a large class of continuous (and hence bounded) linear operators can be represented as in \eqref{eq:ltvsys}. 
In the communications literature operators with input-output relation as in \eqref{eq:ltvsys} are referred to as linear time-varying (LTV) channels/systems and $\sfunc(\tau,\nu)$ is the delay-Doppler spreading function \cite{kailath_measurements_1962,bello_measurement_1969,bello_characterization_1963}. 

For the special case of linear time-invariant (LTI) systems, we have $\sfunc(\tau,\nu) = h(\tau)\delta (\nu)$, so that \eqref{eq:ltvsys} reduces to the standard convolution relation
\begin{equation}
y(t) = \int_\tau h(\tau) \prsig (t-\tau) d\tau.
\label{eq:lti_system}
\end{equation}
The question of identifiability of LTI systems is readily answered by noting that the system's response to the Dirac delta function  
is given by the impulse response $h(t)$, which by \eqref{eq:lti_system} fully characterizes the system's input-output relation. LTI systems are therefore always identifiable, provided that the probing signal can have infinite bandwidth and we can observe the output signal over an infinite duration. 

For LTV systems the situation is fundamentally different. Specifically, 
 Kailath's landmark paper  \cite{kailath_measurements_1962} shows that an LTV system with spreading function compactly supported on a rectangle of area $\SSF$ is identifiable if and only if $\SSF \leq 1$. 
This condition can be very restrictive. Measurements of underwater acoustic communication channels, such as those reported in \cite{eggen_underwater_1997} for example, show that the support area of the spreading function can be larger than $1$. The measurements in \cite{eggen_underwater_1997} exhibit, however, an interesting structural property: 
The nonzero components of the spreading function are scattered across the $(\tau, \nu)$-plane and the sum of the corresponding support areas, henceforth called ``overall support area'', is smaller than $1$. A similar situation arises in radar astronomy \cite{hagfors_mapping_1991}. 
 Bello \cite{bello_measurement_1969} shows that Kailath's identifiability result continues to hold for arbitrarily fragmented spreading function support regions as long as the corresponding overall support area is smaller than $1$. 
 Kozek and Pfander \cite{kozek_identification_2005}  and Pfander and Walnut \cite{pfander_measurement_2006} found elegant functional-analytical identifiability proofs for setups that are more general than those originally considered in \cite{kailath_measurements_1962} and \cite{bello_measurement_1969}. 
However, the results in \cite{kailath_measurements_1962,bello_measurement_1969,kozek_identification_2005,pfander_measurement_2006} require the support region of $\sfunc(\tau,\nu)$ to be known prior to identification, a condition that is very restrictive and often impossible to realize in practice. In the case of underwater acoustic communication channels, e.g., the support area of $\sfunc(\tau,\nu)$ depends critically on surface motion, water depth, and motion of transmitter and receiver. For wireless channels, knowing the spreading function's support region would amount to knowing the delays and Doppler shifts induced by the scatterers in the propagation medium. 

\paragraph*{Contributions}
We show that an operator with input-output relation \eqref{eq:ltvsys} is identifiable, without prior knowledge  of the operator's spreading function support region and without limitations on its total extent, if and only if the spreading function's total support area satisfies $\SSF  \leq \sfrac{1}{2}$. 
What is more, this factor-of-two penalty---relative to the case where the support region is known prior to identification \cite{kailath_measurements_1962,bello_measurement_1969,kozek_identification_2005,pfander_measurement_2006}---can be eliminated 
if one asks for identifiability of \emph{almost all}\footnote{Here, and in the remainder of the paper, ``almost all'' is to be understood in a measure-theoretic sense meaning that the set of exceptions has measure zero.} operators only. This result is surprising as it says that (for almost all operators) there is no price to be paid for not knowing the spreading function's support region in advance. 
Our findings have strong conceptual parallels to the theory of spectrum-blind sampling of sparse multi-band signals \cite{feng_spectrum-blind_1996,feng_universal_1997,lu_theory_2008,mishali_blind_2009,eldar_compressed_2009}.  

Furthermore, we present algorithms which, in the noiseless case, provably recover all operators with $\SSF \leq \sfrac{1}{2}$, and almost all operators with $\SSF < 1$, without requiring prior knowledge of the spreading function's support region; not even its area $\SSF$ has to be known.  
Specifically, we formulate the recovery problem as a continuous multiple measurement vector (MMV) problem \cite{mishali_reduce_2008}. 
We then show that this problem can be reduced to a finite MMV problem \cite{jie_chen_theoretical_2006}. The reduction approach we present is of independent interest as it unifies a number of reduction approaches available in the literature and presents a simplified treatment. 

In the case of wireless channels or radar systems, the spreading function's support region is sparse and typically contained in a rectangle of area $1$. In the spirit of compressed sensing, where sparse objects are reconstructed by taking fewer measurements than mandated by their ``bandwidth'', we show that in this case sparsity (in the spreading function's support region) can be exploited to identify the system while undersampling the response to the probing signal. In the case of channel identification this allows for a reduction of the identification time, and in radar systems it leads to increased resolution. 

\paragraph*{Relation to previous work}
Recently, Taub\"{o}ck et al. \cite{taubock_compressive_2010} and Bajwa et al. \cite{bajwa_learning_2008,bajwa_identification_2011} 
considered the identification of LTV systems with 
spreading function compactly supported  in a rectangle of area $\Delta \leq 1$. While \cite{bajwa_learning_2008,bajwa_identification_2011} assume that the spreading function consists of a finite number of Dirac components whose delays and Doppler shifts are unknown prior to identification, the methods proposed in \cite{taubock_compressive_2010} do not need this assumption. 
In the present paper, we allow general (i.e., continuous, discrete, or mixed continuous-discrete) spreading functions that can be supported in the entire $(\tau,\nu)$-plane with possibly fragmented support region. 
Herman and Strohmer \cite{herman_high-resolution_2009}, in the context of compressed sensing radar, and Pfander et al. \cite{pfander_identification_2008} considered the problem of identifying finite-dimensional matrices that are sparse in the basis of time-frequency shift matrices. 
This setup can be obtained from ours by discretization of the input-output relation \eqref{eq:ltvsys} through band-limitation of the input signal and time-limitation and sampling of the output signal. 
The signal recovery problem in \cite{taubock_compressive_2010,bajwa_learning_2008,bajwa_identification_2011,herman_high-resolution_2009,pfander_identification_2008} is a standard \emph{single measurement} recovery problem \cite{elad_sparse_2010}. 
As we start from a continuous-time formulation we find that, depending on the resolution induced by the discretization through time/band-limitation, the resulting recovery problem can be an MMV problem. This is relevant as multiple measurements can improve the recovery performance significantly. In fact, it is the MMV nature of the recovery problem that allows identification of \emph{almost all} operators with $\SSF <1$. 

\paragraph*{Organization of the paper}
The remainder of the paper is organized as follows. In Section \ref{sec:prform}, we formally state the problem considered. Section \ref{sec:mainresults} contains our main identifiability results with the corresponding proofs given in Sections \ref{sec:proofmthm} and \ref{sec:almostall}. 
In Sections \ref{sec:reconstruct} and \ref{sec:almostall}, we present identifiability algorithms along with corresponding performance guarantees. In Section \ref{sec:discior}, we consider the identification of systems with sparse spreading function compactly supported within a rectangle of area $1$. Section \ref{sec:numerical_recres} contains numerical results.

\paragraph*{Notation} The superscripts $\conj{}$, $\herm{}$, and $\transp{}$ stand for complex conjugation, Hermitian transposition, and transposition, respectively.  We use lowercase boldface letters to denote (column) vectors, e.g., $\mbf x$, and uppercase boldface letters to designate matrices, e.g., $\mbf X$. The entry in the $k$th row and $l$th column of $\mbf X$ is $[\mbf X]_{k,l}$ and the $k$th entry of $\vect{x}$ is $[\vect{x}]_k$. The Euclidean norm of $\mbf x$ is denoted by $\norm[2]{\mbf x}$, and $\norm[0]{\mbf x}$ stands for the number of non-zero entries in $\vect{x}$. 
The space spanned by the columns of $\mbf X$ is  $\range(\mbf X)$, and the nullspace of $\mtx{X}$ is denoted by $\ker (\mtx{X})$. $\spark(\mtx{X})$ designates the cardinality of the smallest set of linearly dependent columns of $\mtx{X}$. 

$|\Omega|$ stands for the cardinality of the set $\Omega$. For sets $\Omega_1$ and $\Omega_2$, we define set addition as $\Omega_1+\Omega_2= \{\omega\colon \omega= \omega_1 +\omega_2, \, \omega_1 \in \Omega_1, \omega_2 \in \Omega_2\}$. 
For a (multi-variate) function $f(\mbf x)$, $\supp (f)$ denotes its support set. For (multi-variate) functions $f(\mbf x)$ and $g(\mbf x)$, both with domain $\Omega$, we write $\innerprod[]{f}{g} \defeq \int_\Omega f(\mbf x) \conj{g}\!(\mbf x) d \mbf x$ for their inner product and $\norm{f} \defeq \sqrt{ \innerprod[]{f}{f}}$ for the norm of $f$. 
We say that a set of 
functions $\{g_1(\mbf x), ...,g_n(\mbf x)\}$ with domain $\Omega$ is linearly independent if there is no vector $\vect{a} \in \complexset^n, \vect{a} \neq \vect{0}$, such that $\herm{\vect{a}} \vect{g}(\vect{x}) = 0, \; \forall \vect{x} \in \Omega$, where  $\vect{g}(\vect{x}) = \transp{[g_1(\vect{x}), ...,  g_n(\vect{x})]}$. 
We denote the dimension of the span of $\{g_1(\vect{x}), ...,g_n(\vect{x})\}$ as $\dim \spanof \{ g_1(\vect{x}), ...,g_n(\vect{x})  \}$.

The Fourier transform of a function $x(t)$ is defined as $X(f) =\int_t x(t) e^{-j2\pi ft} dt$. The Dirac delta function is denoted by  $\delta(t)$ and $\sinc(t) \defeq \sin(\pi t)/(\pi t)$. $L^2(\mb R)$ stands for the space of complex-valued square-integrable functions. 

The random variable $X \sim \jpg(m,\sigma^2)$ is proper complex Gaussian with mean $m$ and variance $\sigma^2$. Finally, for $x\in \mathbb R$, we let $\lfloor x \rfloor$ be the largest integer not greater than $x$.  

\section{Problem statement \label{sec:prform}}
Given the normed linear spaces $\opfrom$ and $\opto$, we consider linear operators $H\colon \opfrom \to \opto$ that can be represented as a weighted superposition of translation operators $T_\tau$, with $(T_{\tau} \prsig )(t) \defeq \prsig (t-\tau), x\in \opfrom$, and modulation operators $M_{\nu}$, with $(M_\nu \prsig)(t) \defeq e^{j2\pi \nu t} \prsig (t), x\in \opfrom$, according to
\begin{equation}
(H\prsig)(t) \defeq \int_\tau \int_\nu  \sfunc (\tau,\nu) ( M_{\nu}  T_{\tau} \prsig )(t)   d\nu \hspace{0.02cm}  d\tau 
\label{eq:hilbertschmidt_s}
\end{equation}
with $\sfunc \in \sfuncspace$, where $\sfuncspace$ is a normed linear space.  
This is a rather general setup, since according to \cite[Thm. 14.3.5]{groechenig_foundations_2001}, a large class of continuous (and hence bounded) linear operators can be represented as in \eqref{eq:hilbertschmidt_s}. 
For the theory to be mathematically precise, we need to consider suitable triplets of spaces $(\opfrom,\opto,\sfuncspace)$ inducing a space of operators $\mc H = \mc H(\opfrom,\opto,\sfuncspace)$. 
The triplets $(\opfrom, \opto,\sfuncspace)$ have to be chosen to ``match''; specifically, $\sfuncspace$ may have to satisfy certain regularity conditions, depending on the choice of $\opfrom$, for \eqref{eq:hilbertschmidt_s} to be well-defined. 
For example, if $\opfrom=\opto$ is a Hilbert space and $\sfuncspace = L^2(\mb R^2)$, then $\mc H$ is the set of Hilbert-Schmidt operators \cite[p.~331, A.8]{groechenig_foundations_2001}.  
 Since our identifiability proof relies on the use of Dirac delta functions as probing signals, we need to choose $\opfrom$ such that it contains generalized functions. A corresponding valid choice is the following \cite{pfander_measurement_2006}: Let $\sfuncspace$ and $\opfrom$ be Feichtinger's Banach algebra  $S_0(\mb R^2)$ \cite{feichtinger_banach_1998} and its dual $S_0'(\mb R)$, respectively, and $\opto = L^2(\mb R)$. 
The corresponding space $\mc H$ is equipped with the Hilbert-Schmidt norm 
\begin{align}
\norm[\opclass]{H} = \norm[\Ltwo]{\sfunc} = \left( \int_
\tau \int_\nu \left| \sfunc(\tau,\nu)\right|^2 \right)^{1/2}.
\label{eq:defHSnorm}
\end{align}
Another valid triplet is obtained by setting $\opfrom = S_0'(\mb R), \opto = S_0'(\mb R)$, and $\sfuncspace = S'_0(\mb R^2)$, 
see \cite{pfander_operator_2006,pfander_sampling_2010}. In this case both $\opfrom$ and $\sfuncspace$ contain generalized functions, in particular Dirac delta functions; however, the norm of the corresponding space $\opclass$ takes on a more complicated form than \eqref{eq:defHSnorm}. 
The norm the space $\opclass$ is equipped with determines the definition of identifiability. While our results, including the identification algorithms, hold true for the setup $(\opfrom = S_0'(\mb R), \opto = S_0'(\mb R),\sfuncspace = S'_0(\mb R^2))$, to keep the exposition simple, we will work with the triplet $(\opfrom = S'_0(\mb R), Y = L^2(\mb R),\sfuncspace = S_0(\mb R^2))$  and the corresponding norm \eqref{eq:defHSnorm}. 
We refer the interested reader to \cite{kozek_identification_2005,pfander_measurement_2006,pfander_operator_2006,pfander_sampling_2010} for a detailed description of the rigorous mathematical setup required for the choice $(\opfrom = S_0'(\mb R), \opto = S_0'(\mb R),\sfuncspace = S'_0(\mb R^2))$. 

\paragraph*{Restrictions on the spreading function} 
Following \cite{kailath_measurements_1962,bello_measurement_1969,kozek_identification_2005,pfander_measurement_2006} we consider spreading functions with compact support. This assumption is not critical and can be justified, e.g., in wireless and in underwater acoustic communication channels as follows. 
The extent of the spreading function in $\tau~(\nu)$-direction is determined by the maximum delay (Doppler shift) induced by the channel. The maximum Doppler shift will be limited as the velocity of objects in the propagation channel and/or the velocity of transmitter and receiver is limited. While the maximum delay induced by scattering objects in the channel can, in principle, be arbitrarily large, contributions corresponding to large enough delay will be sufficiently small to be treated as additive noise, thanks to path loss \cite{tse_fundamentals_2005}.  While we do not present analytical results for the noisy case, the impact of noise on the performance of our identification algorithms is assessed numerically in Section \ref{sec:numerical_recres}.

Following \cite{kozek_identification_2005,pfander_measurement_2006} we, moreover, restrict ourselves to spreading functions with support regions of the form 
\begin{equation}
M_\Gamma  \defeq \!\! \bigcup_{(k,m)\, \in \, \Gamma } \left( \Un + \left(k \csamp ,  \frac{m}{\csamp L} \right) \right)  \subseteq [0, \tau_{\max}) \times [0, \nu_{\max})
\label{eq:suppgridcont}
\end{equation}
where $\Un \defeq \left[ 0, \csamp  \right) \times \left[0, 1/ (\csamp L)  \right)$ is a ``cell'' in the $(\tau,\nu)$-plane and $L \in \mathbb N^+$ 
and $\csamp \in \mb R^+$ are parameters whose role will become clear shortly. The set of ``active cells'' is specified by $\Gamma  \subseteq \Sigma \defeq \{(0,0), (0,1),...,(L-1,L-1)\}$. Since $\tau_{\max} = \csamp L$ and $\nu_{\max} = 1/\csamp$, it follows that, choosing $L$ and $T$ accordingly, $\tau_{\max}$ and $\nu_{\max}$ can be arbitrarily large; the spreading function can hence be supported on an arbitrarily large, but finite, region. We denote the area of  $M_\Gamma$ as $\area(M_\Gamma)$ and note that $\area(M_\Gamma) = |\Gamma| \area(\Un) = |\Gamma|/L$.  

A general, possibly fragmented, support region of the spreading function can  be approximated arbitrarily well by covering it with rectangles $\Un$ (see Figure \ref{fig:supex}), as in \eqref{eq:suppgridcont}, with $\csamp$ and $L$ chosen suitably. Note that $L$ determines how fine this approximation is, since $\area(U) = 1/L$, while $T$ controls the ratio of width to height of $\Un$. 
Characterizing the identifiability of operators whose spreading function support region is not compact, but has finite area, e.g., $\supp(\sfunc) \subseteq \{(\tau,\nu)\colon 0\leq \nu \leq \infty, 0 \leq \tau \leq e^{-\nu}  \}$, is an open problem \cite{pfander_private_2011}.  
\begin{figure}[h!]
\begin{center}

\includegraphics{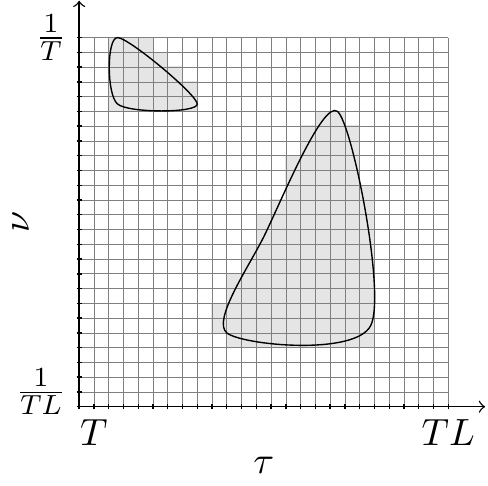}
\end{center}
\caption{
Approximation of a general spreading function support region. }
\label{fig:supex}
\end{figure}
\subsection{Identifiability \label{sec:identf}}

Let us next define  the notion of identifiability
of a set of operators $\opclassgen \subseteq \opclass$. 
The set $\opclassgen$ is said to be identifiable, if there exists a probing signal $\prsig \in \opfrom$ such that for each operator $H\in \opclassgen$, the action of the operator on the probing signal, $H\prsig$, uniquely determines $H$. More formally, we say that  $\opclassgen$ is identifiable if there exists an $\prsig \in \opfrom$ such that 
\begin{equation}
H_1 \prsig = H_2 \prsig \implies H_1=H_2, \quad \forall \hspace{0.1cm} H_1,H_2 \in \opclassgen.
\label{cond:onetoone}
\end{equation}
Identifiability is hence equivalent to invertibility of the mapping
\begin{equation}
T_\prsig\colon \opclassgen \to \opto\colon H \mapsto H \prsig
\label{eq:mapping} 
\end{equation}   
induced by the probing signal $\prsig$. 
In practice, invertibility alone is not sufficient as we want to recover $H$ from $H \prsig$ in a numerically stable fashion, i.e., we want small errors in $H \prsig$ to result in small errors in the identified operator. 
This requirement implies that the inverse of the mapping \eqref{eq:mapping} must be continuous (and hence bounded), which finally motivates the following definition of (stable) identifiability, used in the remainder of the paper.  
\begin{definition}
\label{def:defstableidentification}
We say that $\prsig$ identifies $\opclassgen$ if there exist constants $0 < \alpha \leq \beta < \infty$ such that for all pairs $H_1,H_2 \in \opclassgen$,
\begin{equation}
\alpha \norm[\opclass]{H_1 -H_2} \leq \norm[]{ H_1\prsig - H_2 \prsig  } \leq \beta \norm[\opclass]{ H_1-H_2 }.
\label{eq:cond_stable}
\end{equation}
Furthermore, we say that $\opclassgen$ is identifiable, if there exists an $\prsig \in \opfrom$ such that $\prsig$ identifies $\opclassgen$. 
\end{definition}

In \cite{kozek_identification_2005} the identification of operators of the form \eqref{eq:hilbertschmidt_s} under the assumption of the spreading function support region being known prior to identification is considered. 
The set $\opclassgen$ in \cite{kozek_identification_2005} therefore consists of operators with spreading function supported on a given region $M_\Gamma$, (i.e., $\supp(\sfunc) \subseteq M_{\Gamma}$ for all $H \in \opclassgen$),  which renders $\opclassgen$ a linear subspace of $\opclass$ so that $(H_1-H_2) \in \opclassgen$, for all $H_1,H_2 \in \opclassgen$. Hence Definition \ref{def:defstableidentification} above is equivalent to the following: $\prsig$ identifies $\opclassgen$ if there exist constants $0 < \alpha \leq \beta < \infty$ such that for all $H\in \opclassgen$, $\alpha \norm[\opclass]{H} \leq \norm[]{ H\prsig  } \leq \beta \norm[\opclass]{ H}$, which is the identifiability condition put forward in \cite{kozek_identification_2005}. Not knowing the spreading function's support region prior to identification will require consideration of sets $\opclassgen$ that are not linear subspaces of $\opclass$, which makes the slightly more general Definition \ref{def:defstableidentification} necessary. As detailed in Appendix \ref{app:binv}, the lower bound in \eqref{eq:cond_stable} guarantees that the inverse of $T_\prsig$ in \eqref{eq:mapping} exists and is bounded and hence continuous, as desired. 
The ratio $\beta/\alpha$ quantifies the noise sensitivity of the identification process. Specifically, suppose that $\prsig$ identifies $\opclassgen$, but the measurement $H_1\prsig$, $H_1 \in \opclassgen$, is corrupted by additive noise. Concretely, assume that instead of $H_1\prsig$, we observe $H_1\prsig + w$, where $w\in \opto$ is bounded, i.e., $\norm[]{w}<\infty$. 
Now assume that $H_2 \in \opclassgen$ is consistent with the noisy observation $H_1 \prsig + w$, i.e., $H_1 \prsig +w=H_2\prsig$. We would like the error in the identified operator, i.e., $\norm[\opclass]{H_1-H_2}$, to be proportional to $\norm{w}$. The lower bound in \eqref{eq:cond_stable} guarantees that this is, indeed, the case as 
\begin{equation}
\norm[\opclass]{H_1 -H_2} \leq \frac{1}{\alpha}  \norm[]{ H_1\prsig - H_2 \prsig } = \frac{1}{\alpha}  \norm[]{ w }.   
\label{eq:erpropnoise}
\end{equation}
Since $\alpha \leq \beta$, it follows from \eqref{eq:erpropnoise} that $\beta/\alpha = 1$ is optimal in terms of noise sensitivity. 
We can also conclude from \eqref{eq:erpropnoise} that larger $\alpha$ leads to smaller noise sensitivity. Increased $\alpha$, however, simply amounts to increased power of the probing signal. This can be seen as follows. 
 Suppose we found an $\prsig_1$ that identifies $\opclassgen$ with constants $\alpha,\beta$ in \eqref{eq:cond_stable}. Then, $c \, \prsig_1$ with $c\in \mathbb R$ identifies $\opclassgen$ with constants $|c| \alpha, |c|\beta$. Choosing $|c|$ large will therefore lead to small noise sensitivity. 

\section{Main results \label{sec:mainresults}}
Before stating our main results, 
we define the set of operators with spreading function supported on a given region $M_\Gamma$ (with $M_\Gamma$ as defined in \eqref{eq:suppgridcont}):
\begin{equation}
\opclassr  \defeq \{ H \in \mc H\colon \supp(\sfunc) \subseteq M_{\Gamma} \}.
\end{equation}
Kailath \cite{kailath_measurements_1962} and Kozek and Pfander \cite{kozek_identification_2005} considered the case where $M_\Gamma$ is a (single) rectangle, and Bello \cite{bello_measurement_1969} and Pfander and Walnut \cite{pfander_measurement_2006} analyzed  the case where $M_\Gamma$ is allowed to be fragmented and spread over the $(\tau,\nu)$-plane. In both cases the support region $M_\Gamma$ is assumed to be known prior to identification. 
We start by recalling the key result in \cite{pfander_measurement_2006}, which subsumes the results in \cite{kailath_measurements_1962,bello_measurement_1969}, and \cite{kozek_identification_2005}.
\begin{theorem}[\cite{pfander_measurement_2006}]
Let $M_\Gamma$ be given. 
The set of operators $\opclassr$ is identifiable if and only if $\area(M_\Gamma) \leq 1$.
\label{pr:necessity}
\label{pro:knownsupport}
\end{theorem}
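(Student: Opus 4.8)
The plan is to collapse the continuous identification problem onto a finite linear-algebra problem by probing with a weighted Dirac train, after which the threshold $\area(M_\Gamma)\le 1$, i.e.\ $|\Gamma|\le L$, emerges as the boundary between an invertible and a rank-deficient finite system. I would take the probing signal to be the weighted Dirac comb $\prsig(t)=\sum_{n\in\mb Z} c_n\,\delta(t-n\csamp)$ with an $L$-periodic weight vector $\mathbf c=(c_0,\dots,c_{L-1})^{T}\in\complexset^{L}$; this is admissible because $\opfrom=S_0'(\mb R)$ contains Dirac combs. Substituting into \eqref{eq:hilbertschmidt_s} and using the sifting property gives
\[
(H\prsig)(t)=\sum_{n} c_n\int_\nu \sfunc(t-n\csamp,\nu)\,e^{j2\pi\nu t}\,d\nu .
\]
Because $\supp(\sfunc)\subseteq M_\Gamma\subseteq[0,\csamp L)\times[0,1/\csamp)$, the delay sum has only $L$ nonzero terms and the Doppler integral splits over the $L$ cells in $\nu$-direction. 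Writing $t=l\csamp+s$ and expanding $\sfunc$ cell-by-cell, the modulation contributes a DFT kernel $e^{j2\pi ml/L}$ in the Doppler-cell index $m$, while the weights contribute a cyclic (Toeplitz) structure in the delay-cell index $k$; together these assemble a finite time-frequency-shift matrix.

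Organizing this computation through a Zak-type transform of the output adapted to the lattice $\csamp\mb Z$, the sub-cell coordinates $(s,\xi)\in[0,\csamp)\times[0,1/(\csamp L))$ act as a continuous measurement index, and for each fixed $(s,\xi)$ one obtains a linear system
\[
\mathbf y(s,\xi)=\mathbf G_{\mathbf c}|_\Gamma\,\mathbf s(s,\xi),
\]
where $\mathbf s(s,\xi)\in\complexset^{|\Gamma|}$ stacks the unknown spreading-function values on the active cells, $\mathbf y(s,\xi)\in\complexset^{L}$ collects the $L$ output cells of one period, and $\mathbf G_{\mathbf c}|_\Gamma$ is the $L\times|\Gamma|$ submatrix (indexed by $\Gamma$) of the $L\times L^{2}$ matrix whose columns are the $L^{2}$ cyclic time-frequency shifts $\pi(k,m)\mathbf c$ of the weight vector. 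The key point is that a common matrix $\mathbf G_{\mathbf c}|_\Gamma$---up to invertible, sub-cell-dependent diagonal factors that do not affect its rank---governs every sub-cell position, so this is precisely a continuous MMV problem and identifiability reduces to injectivity of this single finite matrix.

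For sufficiency ($\area(M_\Gamma)\le 1$, i.e.\ $|\Gamma|\le L$) I would invoke the generic full-spark property of Gabor matrices: for almost every $\mathbf c\in\complexset^{L}$ one has $\spark(\mathbf G_{\mathbf c})=L+1$, i.e.\ every set of $L$ columns is linearly independent. Fixing such a $\mathbf c$---legitimate because $M_\Gamma$, and hence $\Gamma$, is known in advance---the $|\Gamma|\le L$ columns indexed by $\Gamma$ are linearly independent, so $\mathbf G_{\mathbf c}|_\Gamma$ has full column rank. The MMV system then determines $\mathbf s(s,\xi)$, hence $\sfunc$, for every sub-cell position, and a Zak-domain Parseval relation combined with $0<\minsingv{\mathbf G_{\mathbf c}|_\Gamma}\le\maxsingv{\mathbf G_{\mathbf c}|_\Gamma}<\infty$ yields constants $\alpha,\beta$ satisfying \eqref{eq:cond_stable}. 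For the Dirac-train probe, necessity is immediate: if $\area(M_\Gamma)>1$, i.e.\ $|\Gamma|>L$, then $\mathbf G_{\mathbf c}|_\Gamma$ has more columns than rows, so it has a nontrivial kernel, and a kernel vector placed identically across sub-cells produces a nonzero $\sfunc$ supported on $M_\Gamma$ with $H\prsig=0$, violating the lower bound in \eqref{eq:cond_stable}.

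The main obstacle is upgrading necessity so that it excludes \emph{every} probing signal, not merely Dirac trains. Here I would use that stable identifiability of the subspace $\opclassr$ by $\prsig$ is equivalent to $\{M_\nu T_\tau\prsig\}_{(\tau,\nu)\in M_\Gamma}$ being a Riesz sequence, and then invoke the Nyquist-density obstruction for continuous Weyl--Heisenberg systems: such a Riesz sequence cannot be supported on a region of area exceeding $1$. Equivalently, one discretizes the synthesis along the Nyquist lattice $\csamp\mb Z\times\frac{1}{\csamp L}\mb Z$ of the support and shows that, for \emph{any} window $\prsig$, the per-sub-cell system has at most $L$ linearly independent rows, so $|\Gamma|>L$ always forces a kernel. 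Making this folding/sampling reduction rigorous in the distributional triplet $(S_0',L^2,S_0)$---in particular justifying the interchange of $H$ with the Dirac comb, the convergence of the Zak-type series, and the norm equivalences tying $\norm[\opclass]{\cdot}$ to the finite $\ell_2$ norms---is the remaining technical point I expect to require the most care.
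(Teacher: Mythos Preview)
The paper does not actually prove this theorem; it is stated as a recall of the main result of \cite{pfander_measurement_2006} and is subsequently \emph{used} (specifically its necessity direction) in the proof of necessity for Theorem~\ref{pr:identifiabilitynecsparse}. So there is no ``paper's own proof'' to compare against in the strict sense.

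That said, your sufficiency argument is essentially identical to the machinery the paper develops in Section~\ref{sec:suff} for the harder Theorem~\ref{pr:identifiabilitynecsparse}: the weighted Dirac comb \eqref{eq:probingsignal}, the Zak-transform reduction leading to $\mathbf z(t,f)=\mathbf A_{\mathbf c}\mathbf s(t,f)$ in \eqref{eq:sysofeq}, the restriction to $\mathbf A_\Gamma$ in \eqref{eq:sysofeq_rest}, Lemma~\ref{le:boundedness_classr} for the norm equivalence, and finally the full-spark result \cite{lawrence_linear_2005} for $\mathbf A_{\mathbf c}$. Your $\mathbf G_{\mathbf c}|_\Gamma$ is the paper's $\mathbf A_\Gamma$, and your ``Zak-domain Parseval relation combined with singular-value bounds'' is exactly Lemma~\ref{le:boundedness_classr}. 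So on the sufficiency side you have reconstructed the paper's approach.

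On necessity, you correctly flag that the kernel argument only disposes of the Dirac-comb probe, and that ruling out \emph{every} $\prsig$ is the real content. Your proposed route via the Riesz-sequence characterization and a Nyquist-density obstruction for Weyl--Heisenberg systems is indeed the line taken in \cite{kozek_identification_2005,pfander_measurement_2006}; the paper itself simply cites this and does not reproduce it. Your honest caveat that the distributional bookkeeping in the $(S_0',L^2,S_0)$ triplet is where the work lies is accurate---in \cite{pfander_measurement_2006} this is handled by a careful functional-analytic argument (essentially a Balian--Low-type obstruction), not by the elementary row-count you sketch. The claim that ``for any window $\prsig$, the per-sub-cell system has at most $L$ linearly independent rows'' is morally right but does not by itself give necessity in the sense of Definition~\ref{def:defstableidentification}, because a general $\prsig$ need not factor through the $L\times L^2$ Gabor matrix at all; the actual argument requires showing that no bounded-below map $H\mapsto H\prsig$ can exist on $\opclassr$ when $\area(M_\Gamma)>1$, which is genuinely harder than a dimension count.
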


As mentioned earlier, knowing the support region $M_\Gamma$ prior to identification is very restrictive and often impossible to realize in practice. It is therefore natural to ask what kind of identifiability results one can get when this assumption is dropped. Concretely, this question can be addressed by considering the set of operators \begin{equation*}
\opclassbr \defeq \bigcup_{M_\Gamma\colon \area (M_\Gamma) \leq \SSF} \mc H_{M_\Gamma}
\end{equation*}
 which consists of all sets $\opclassr$ such that $\area (M_\Gamma) \leq \SSF$. 
 
Our main identifiability results are stated in the two theorems below. 
\begin{theorem}
The set of operators  $\opclassbr$ is identifiable if and only if 
$\SSF \leq \sfrac{1}{2}$.
\label{pr:identifiabilitynecsparse}
\label{th:sufficiency}
\end{theorem}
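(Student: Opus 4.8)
The plan is to prove both directions by passing from the continuous identification problem induced by a probing signal to a finite linear-algebraic condition on a Gabor-type matrix, and—for the converse—by letting the known-support result Theorem~\ref{pro:knownsupport} do the heavy lifting so that no separate analysis of arbitrary probing signals is required. The common engine is the reduction of the response to a periodized weighted Dirac train to a finite multiple-measurement-vector (MMV) system governed by the $L\times L^2$ finite Gabor matrix, after which the threshold $\sfrac{1}{2}$ appears as a spark budget.

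\emph{Necessity ($\SSF>\sfrac{1}{2}$ implies non-identifiability).} I would argue by contradiction: suppose some $\prsig$ identifies $\opclassbr$ with constants $0<\alpha\le\beta<\infty$. Because $2\SSF>1$, I can choose a support region $M_{\Gamma'}$ with $\area(M_{\Gamma'})>1$ that splits into two disjoint pieces, $\Gamma'=\Gamma_1\sqcup\Gamma_2$ with $\area(M_{\Gamma_i})\le\SSF$ (refining the grid, i.e. enlarging $L$, if needed to make the cell counts work out). The crux is that every $G\in\mathcal H_{M_{\Gamma'}}$ is a difference of two operators in $\opclassbr$: letting $H_1$ have spreading function $s_G\,\mathbf 1_{M_{\Gamma_1}}$ and $H_2$ have spreading function $-s_G\,\mathbf 1_{M_{\Gamma_2}}$, we get $H_1\in\mathcal H_{M_{\Gamma_1}}\subseteq\opclassbr$, $H_2\in\mathcal H_{M_{\Gamma_2}}\subseteq\opclassbr$, and $H_1-H_2=G$. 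By linearity the two-sided bound \eqref{eq:cond_stable} that $\prsig$ satisfies on all such differences yields $\alpha\norm[\opclass]{G}\le\norm[]{G\prsig}\le\beta\norm[\opclass]{G}$ for \emph{every} $G\in\mathcal H_{M_{\Gamma'}}$. Hence $\prsig$ identifies the subspace $\mathcal H_{M_{\Gamma'}}$; but $\area(M_{\Gamma'})>1$, so Theorem~\ref{pro:knownsupport} says this is impossible—a contradiction. Thus no probing signal can identify $\opclassbr$ once $\SSF>\sfrac{1}{2}$.

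\emph{Sufficiency ($\SSF\le\sfrac{1}{2}$ implies identifiability).} Here I would exhibit a probing signal. Take $\prsig=\sum_n c_n\,\delta(\cdot-n\csamp)$ with an $L$-periodic sequence $c$. Inserting this into \eqref{eq:hilbertschmidt_s} and using the grid geometry of $M_\Gamma$, the response can—after a Zak-transform/sampling step—be recast as a (continuous) MMV system $\mathbf Y(\cdot)=\mathbf G\,\mathbf B(\cdot)$, where $\mathbf G\in\complexset^{L\times L^2}$ is the finite Gabor matrix with columns the time-frequency shifts $\{M_m T_k c\}_{(k,m)\in\Sigma}$ and the rows of $\mathbf B$ (indexed by cells $(k,m)$) carry the within-cell values of the spreading function; the active rows correspond exactly to $\Gamma$, so $\mathbf B$ has at most $|\Gamma|\le\SSF L$ nonzero rows. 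For a difference $H_1-H_2$ of two operators in $\opclassbr$, the associated $\mathbf B$ has at most $2\SSF L\le L$ nonzero rows. I would then pick $c$ so that $\mathbf G$ has \emph{full spark}, $\spark(\mathbf G)=L+1$, i.e. every $L$ columns of $\mathbf G$ are linearly independent—such $c$ exist generically (see e.g. \cite{pfander_identification_2008}). Then $\mathbf G\,\mathbf B=\mathbf 0$ with at most $L$ nonzero rows forces $\mathbf B=\mathbf 0$, so $H_1\prsig=H_2\prsig\Rightarrow H_1=H_2$, giving the lower-bound injectivity in \eqref{eq:cond_stable}. Stability is then a finite matter: there are only finitely many index sets $S\subseteq\Sigma$ with $|S|\le L$, each selecting a full-column-rank submatrix $\mathbf G_S$, so $\alpha^2=\min_{|S|\le L}\sigma_{\min}^2(\mathbf G_S)>0$ is a uniform lower frame bound, and boundedness of $c$ supplies the upper constant $\beta$.

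\emph{Main obstacle.} The conceptual core—and the step I expect to demand the most care—is the reduction to the finite system $\mathbf Y=\mathbf G\,\mathbf B$: one must check that probing with the periodized Dirac train is well defined on the chosen operator space (recall $\opfrom$ contains generalized functions), that the sampling/Zak step is lossless, and that the row-support of $\mathbf B$ coincides with the active cells $\Gamma$ exactly. Once this reduction is secured, the remainder is finite linear algebra, and the threshold becomes transparent: $\SSF\le\sfrac{1}{2}$ is precisely the statement that the at most $2\SSF L\le L$ active columns of a full-spark $L\times L^2$ Gabor matrix stay within the $L$-column independence budget, whereas $\SSF>\sfrac{1}{2}$ overruns it and, through the difference-splitting argument fed into Theorem~\ref{pro:knownsupport}, destroys identifiability for every probing signal.
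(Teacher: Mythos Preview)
Your proposal is correct and follows essentially the same route as the paper: necessity by writing any operator supported on a region of area $>1$ as a difference of two elements of $\opclassbr$ and invoking Theorem~\ref{pro:knownsupport} (the paper packages this as Lemma~\ref{pr:blind_identification}), and sufficiency via the weighted Dirac-train probe, the Zak-transform reduction to the finite Gabor system, and the full-spark property of $\mtx A_{\mbf c}$ for generic $\mbf c$ (the paper cites \cite{lawrence_linear_2005} rather than \cite{pfander_identification_2008} for this, and encapsulates the stability step as Lemma~\ref{le:boundedness_classr}). Your identification of the Zak-reduction as the technical crux, and your handling of stability by minimizing $\sigma_{\min}$ over the finitely many admissible supports, both match the paper's treatment.
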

\begin{proof}
See Section \ref{sec:proofmthm}. 
\end{proof}

The main implication of Theorem \ref{th:sufficiency} is that the penalty for not knowing the spreading function's support region prior to identification is a factor-of-two in the area of the spreading function. 
The origin of this factor-of-two penalty can be elucidated as follows. 
For operators $H_1,H_2$ with spreading function supported on $M_\Gamma$, i.e., $H_1, H_2 \in \opclassr$, we have $(H_1-H_2) \in \opclassr$, i.e.,\footnote{Homogeneity is trivially satisfied.} $\opclassr$ is a linear subspace of $\opclass$. 
In the case of unknown spreading function support region we have to deal with the (much larger) set $\opclassbr$, consisting of all sets $\opclassr$ with $\area (M_\Gamma) \leq \SSF$. It is readily seen that $\opclassbr$ is not a linear subspace of $\opclass$. Simply take $H_1,H_2 \in \opclassbr$ such that the support regions of $\sfuncm{1}$ and $\sfuncm{2}$ both have area $\SSF$ and are disjoint. While $(H_1-H_2) \notin \opclassbr$, we do, however, have that $H_1-H_2 \in \mathcal X(2 \SSF)$. 
This observation lies at the heart of the factor-of-two penalty in $\SSF$ as quantified by Theorem \ref{th:sufficiency}. 

We can eliminate this penalty by relaxing the identification requirement to apply to ``almost all'' $H\in \opclassbr$ instead of ``all'' $H\in \opclassbr$. 
\newcommand{\opsetaa}{\mc Y(\Delta)}
To be specific, we consider identifiability of a subset $\opsetaa \subset \opclassbr$, containing ``almost all'' $H \in \opclassbr$. The set $\opsetaa$ is obtained as follows. First, set 
\begin{align}
s_{k,m}(t,f) \defeq \sfunc \! \left( t + k  \csamp , f +   \frac{m}{\csamp L}\right)  e^{j 2 \pi \left( f + \frac{m}{\csamp L} \right)t}
\label{def:skm}
\end{align}
for $(k,m) \in \Gamma, \; (t,f) \in \Un$, 
and then define 
\begin{align*}
\opsetaa \defeq \{H \in \opclassbr \colon \{s_{k,m}(t,f), (k,m) \in \Gamma\} 
 \text{ are linearly independent on } \Un\}.
\end{align*}
The motivation for this specific definition of the set $\opsetaa$ will become clear in Section \ref{sec:almostall}. 
At this point, it is only important to note that the condition on the $s_{k,m}(t,f)$ in the definition of $\opsetaa$ allows to eliminate the factor-of-two penalty in $\SSF$. 
\begin{theorem} 
The set of operators $\opsetaa$ is identifiable if $\SSF < 1$. 
\label{th:almost_all}
\end{theorem}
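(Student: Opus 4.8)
The plan is to fix a single probing signal---a weighted Dirac train $\prsig(t) = \sum_n c_n \delta(t - n\tauT)$ on the delay grid of spacing $\tauT$, exactly as in the known-support setting of Theorem~\ref{pro:knownsupport}, but now used simultaneously for all unknown $\Gamma$---and to recast the identification of $H$ from $H\prsig$ as a \emph{continuous} multiple-measurement-vector (MMV) problem. First I would compute $H\prsig$ and organize it, over the observation, into a relation $\mathbf y(\lambda) = \mathbf A\,\mathbf x(\lambda)$, $\lambda \in \Un$, where the ambient index set of $\mathbf x(\lambda)$ is the full cell grid $\Sigma$ (of size $L^2$), its nonzero rows are exactly the active cells $\Gamma$, the rows of the associated signal matrix are the functions $s_{k,m}$ of \eqref{def:skm}, and $\mathbf A$ is a structured (Vandermonde/DFT-type) matrix. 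The weights $c_n$ are chosen so that $\mathbf A$ has full spark, $\spark(\mathbf A) = L+1$; this is the same mechanism that renders $\mathbf A_\Gamma$ injective in the known-support case.

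Second, I would invoke the continuous-to-finite MMV reduction developed in Section~\ref{sec:reconstruct} to replace the continuous system by a finite one, $\mathbf Y = \mathbf A\,\mathbf X$, preserving both the row support and the rank of the signal. Here the definition of $\opsetaa$ enters decisively: because $\{s_{k,m}\}_{(k,m)\in\Gamma}$ are linearly independent on $\Un$, the signal matrix has maximal rank $|\Gamma|$, and since $|\Gamma| < \spark(\mathbf A)$ the map $\mathbf A_\Gamma$ is injective, so $\rnk(\mathbf Y) = |\Gamma|$.

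Third---and this is the step that removes the factor-of-two penalty of Theorem~\ref{th:sufficiency}---I would apply the finite MMV uniqueness bound \cite{jie_chen_theoretical_2006}: a solution with $k$ nonzero rows is the unique sparsest solution provided $k < (\spark(\mathbf A) - 1 + \rnk(\mathbf Y))/2$. Substituting $\spark(\mathbf A) = L+1$ and $\rnk(\mathbf Y) = |\Gamma|$ collapses this to $|\Gamma| < L$, i.e. $\area(M_\Gamma) = |\Gamma|/L < 1$, which holds because $\area(M_\Gamma) \le \SSF < 1$. Consequently, if $H_1, H_2 \in \opsetaa$ satisfy $H_1\prsig = H_2\prsig$, their representations $\mathbf X^{(1)}, \mathbf X^{(2)}$ are sparsest solutions of one and the same $\mathbf Y$, each with number of nonzero rows equal to $\rnk(\mathbf Y)$ and each satisfying the bound; hence they coincide and $H_1 = H_2$. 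The point worth stressing is that the rank diversity is precisely what lifts the uniqueness threshold from the rank-one value (which would force $|\Gamma| < (L+1)/2$ and reproduce the $\SSF \le \sfrac{1}{2}$ bound) up to $|\Gamma| < L$.

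The main obstacle is upgrading this one-to-one property to the two-sided \emph{stable} bound \eqref{eq:cond_stable} with uniform constants $\alpha,\beta$ over $\opsetaa$. The difficulty is that $\opsetaa$ is not a linear subspace: a difference $H_1 - H_2$ may occupy up to $2(L-1) > L$ cells, so one cannot simply bound $\mathbf A$ below on the union support $\Gamma_1 \cup \Gamma_2$, and injectivity of $\mathbf A$ there fails. The upper bound $\beta$ is immediate from $\norm{\mathbf A} < \infty$ together with boundedness of the probing map. For the lower bound I would return to the finite MMV and control the smallest singular values of $\mathbf A$ over supports of size $< L$, using the equal-rank structure forced by linear independence to confine $\mathbf X^{(1)} - \mathbf X^{(2)}$ to a configuration on which $\mathbf A$ is bounded below. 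Showing that $\alpha$ can be taken \emph{uniform}---in particular that it does not deteriorate as the $s_{k,m}$ approach linear dependence near the boundary of $\opsetaa$---is the delicate heart of the argument and the step I expect to demand the most care.
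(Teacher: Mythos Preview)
Your approach is essentially the paper's: the same weighted Dirac train, the same reduction to $\mathbf z(t,f)=\mathbf A_{\mathbf c}\,\mathbf s(t,f)$, and the same application of the MMV uniqueness threshold (Theorem~\ref{thm:uniqueness_IMV}) with $K=|\Gamma|$---which is exactly what the linear-independence hypothesis defining $\opsetaa$ buys---collapsing $|\Gamma|<(L+K)/2$ to $|\Gamma|<L$. The paper's entire proof is those two lines and stops there.

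Your final paragraph goes beyond the paper, and your instinct is right but the news is sobering: the paper does \emph{not} supply the uniform lower constant $\alpha$ of Definition~\ref{def:defstableidentification}; its proof establishes only that $\IMV$ recovers $(\Gamma,\mathbf s_\Gamma)$, i.e., injectivity of $H\mapsto H\prsig$ on $\opsetaa$. In fact for $\SSF>\sfrac{1}{2}$ no uniform $\alpha>0$ over $\opsetaa$ can exist. Take disjoint $\Gamma_1,\Gamma_2$ with $|\Gamma_1|+|\Gamma_2|>L$, a nonzero kernel vector $(\mathbf v_1,-\mathbf v_2)$ of $\mathbf A_{\Gamma_1\cup\Gamma_2}$, and set $s^{(i,n)}_{k,m}=[\mathbf v_i]_{(k,m)}\,\phi+\epsilon_n\,\psi^{(i)}_{k,m}$ for a fixed $\phi$ and orthonormal perturbations $\psi^{(i)}_{k,m}$; then $H^{(n)}_i\in\opsetaa$ for every $\epsilon_n>0$, yet $\norm{H^{(n)}_1\prsig-H^{(n)}_2\prsig}=O(\epsilon_n)$ while $\norm[\opclass]{H^{(n)}_1-H^{(n)}_2}\to\norm[2]{\mathbf v}>0$. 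So the step you flag as the ``delicate heart of the argument'' is not carried out in the paper either; what Theorem~\ref{th:almost_all} actually delivers, as proved, is unique recovery on $\opsetaa$ rather than stability uniform over $\opsetaa$.
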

\begin{proof}
See Section \ref{sec:almostall}.
\end{proof}
In order to demonstrate that ``almost all'' $H\in \opclassbr$ are in $\opsetaa$, suppose that
\footnote{Note that every $H\in \opclassbr$ can be represented by an expansion of the corresponding $s_{k,m}(t,f)$ into a set of orthonormal functions.} $s_{k,m} (t,f) = \sum_{p=1}^{P} c_{p}^{(k,m)} g_p(t,f)$, where $\{g_p(t,f), p=1,...,P\}$ is a set of functions orthogonal on $\Un$ and the $c_{p}^{(k,m)}$ are drawn independently from a continuous distribution. Then, the $s_{k,m}(t,f)$ will be linearly independent on $\Un$ with probability one \cite{eaton_non-singularity_1973}, if $P\geq L$. 
Finally, note that the operator $H$ with spreading function
\[
\sfunc \! \left( t + k  \csamp , f +   \frac{m}{\csamp L}\right) = e^{-j 2 \pi  \frac{m}{\csamp L} t}, \quad (k,m) \in \Gamma
\]
 where $\Gamma$ satisfies $|\Gamma|/L \leq \SSF$, is an example of an operator that is in $\opclassbr$ but not in $\opsetaa$. 

Putting things together, we have shown that ``almost all'' operators $H\in \opclassbr$ can be identified if $\SSF < 1$. 
This result is surprising as it says that there is no penalty for not knowing the spreading function's support region prior to identification, provided that one is content with a recovery guarantee for ``almost all'' operators. 


The factor-of-two penalty in Theorem \ref{th:sufficiency} has the same roots as the factor-of-two penalty in sparse signal recovery \cite{donoho_optimally_2003}, in the recovery of sparsely corrupted signals \cite{studer_recovery_2011}, in the recovery of signals that lie in a union of subspaces \cite{lu_theory_2008}, and, most pertinently, in
spectrum-blind sampling as put forward by Feng and Bresler \cite{feng_spectrum-blind_1996,feng_universal_1997,bresler_spectrum-blind_2008} and Mishali and Eldar \cite{mishali_blind_2009}. 
We hasten to add that Theorem \ref{th:almost_all} is inspired by the insight that---in the context of spectrum-blind sampling---the factor-of-two penalty in sampling rate can be eliminated by relaxing the recovery requirement to ``almost all'' signals \cite{feng_spectrum-blind_1996,bresler_spectrum-blind_2008}. Despite the conceptual similarity of the statement in Theorem \ref{th:almost_all} above and the result in \cite{feng_spectrum-blind_1996,bresler_spectrum-blind_2008}, the technical specifics are vastly different, as we shall see later. 

\paragraph*{Generalizations}

Theorems \ref{th:sufficiency} and \ref{th:almost_all} can easily be extended to operators with multiple inputs (and single output), i.e., operators whose response to the vector-valued signal 
 $\vect{\prsig}(t) =  \transp{[ \prsig_{0}(t), ..., \prsig_{\I-1}(t) ]}$ is given by 
\begin{equation}
(\vect{H} \vect{x}) (t) =    \sum_{\i=0}^{\I-1}  \int_\tau \int_\nu  \sfuncm{\i} (\tau,\nu) \prsig_{\i}(t-\tau)   e^{j 2\pi \nu t}  d\nu d\tau
\label{eq:ltvsysmiso}
\end{equation}
where $\sfuncm{\i}(\tau,\nu)$ is the spreading function corresponding to the (single-input) operator between input $\i$ and the output. 
For the case where the support regions of all spreading functions $\sfuncm{i}$ are known prior to identification, Pfander showed in \cite{pfander_measurement_2008} that the operator $\vect{H}$ is identifiable if and only if $\sum_{\i=0}^{\I-1} \area(\supp(\sfuncm{i})) < 1$. When the support regions are unknown, 
an extension of Theorem \ref{th:sufficiency} shows that $\vect{H}$ is identifiable if and only if  $\sum_{\i=0}^{\I-1} \area(\supp(\sfuncm{i})) \leq 1/2$. 
If one asks for identifiability of ``almost all'' operators only, the condition $\sum_{\i=0}^{\I-1} \area(\supp(\sfuncm{i})) \leq 1/2$ is replaced by $\sum_{\i=0}^{\I-1} \area(\supp(\sfuncm{i})) < 1$. Finally, we note that these results carry over to the case of operators with multiple inputs and multiple outputs (MIMO). Specifically, a MIMO channel is identifiable if each of its MISO subchannels is identifiable, 
see \cite{pfander_measurement_2008} for the case of known support regions.

\section{\label{sec:proofmthm}Proof of Theorem \ref{pr:identifiabilitynecsparse}}

\subsection{Necessity \label{sec:necessary}}

To prove necessity in Theorem \ref{pr:identifiabilitynecsparse}, we start by stating an equivalence condition on the identifiability of $\opclassbr$. This condition is often easier to verify than the condition in Definition \ref{def:defstableidentification}, and is inspired by a related result on sampling of signals in unions of subspaces  \cite[Prop. 2]{lu_theory_2008}. 
\begin{lemma}
$\prsig$ identifies $\opclassbr$ if and only if it identifies all sets
\[
\mc H_{M_\Phi \cup M_\Theta} \defeq \{H\colon H = H_1 - H_2 , H_1 \in \mc H_{M_\Phi}, H_2 \in \mc H_{M_\Theta} \}
\]
 with $\area(M_\Phi) \leq \SSF$ and $\area(M_\Theta) \leq \SSF$, where $\Phi, \Theta \subseteq \Sigma$. 
\label{pr:blind_identification}
\end{lemma}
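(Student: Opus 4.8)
The plan is to exploit the fact that the set $\mc H_{M_\Phi \cup M_\Theta}$ appearing in the statement is nothing but the linear subspace of operators whose spreading function is supported on $M_\Phi \cup M_\Theta = M_{\Phi \cup \Theta}$. Indeed, given any $H$ with $\supp(\sfunc) \subseteq M_\Phi \cup M_\Theta$, restricting $\sfunc$ to $M_\Phi$ yields some $H_1 \in \mc H_{M_\Phi}$, and $H_2 \defeq H_1 - H$ is then supported on $M_\Theta$, so $H = H_1 - H_2$ lies in the defined set; the reverse inclusion is immediate since $H_1 - H_2$ is always supported on $\supp(\sfunc[H_1]) \cup \supp(\sfunc[H_2]) \subseteq M_\Phi \cup M_\Theta$. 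Because this set is a subspace and $T_\prsig$ is linear, ``$\prsig$ identifies $\mc H_{M_\Phi \cup M_\Theta}$'' is equivalent, by the homogeneity argument already recorded after Definition~\ref{def:defstableidentification}, to the existence of constants $0 < \alpha \le \beta < \infty$ with $\alpha \norm[\opclass]{H} \le \norm[]{H\prsig} \le \beta \norm[\opclass]{H}$ for every $H \in \mc H_{M_\Phi \cup M_\Theta}$. The bridge between the two formulations is then the elementary observation that for any $H_1,H_2 \in \opclassbr$ one has, by definition of $\opclassbr$, that $H_1 \in \mc H_{M_\Phi}$ and $H_2 \in \mc H_{M_\Theta}$ for some $\Phi,\Theta \subseteq \Sigma$ with $\area(M_\Phi) \le \SSF$ and $\area(M_\Theta) \le \SSF$, whence $H_1 - H_2 \in \mc H_{M_\Phi \cup M_\Theta}$.

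For the ``only if'' direction I would assume $\prsig$ identifies $\opclassbr$ with constants $\alpha,\beta$, fix an admissible pair $\Phi,\Theta$, and take any $H \in \mc H_{M_\Phi \cup M_\Theta}$, written as $H = H_1 - H_2$ with $H_1 \in \mc H_{M_\Phi} \subseteq \opclassbr$ and $H_2 \in \mc H_{M_\Theta} \subseteq \opclassbr$. Applying \eqref{eq:cond_stable} to the pair $(H_1,H_2)$ and using $H_1\prsig - H_2\prsig = H\prsig$ together with $H_1 - H_2 = H$ gives $\alpha \norm[\opclass]{H} \le \norm[]{H\prsig} \le \beta \norm[\opclass]{H}$ directly. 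The same $\alpha,\beta$ serve for every $\Phi,\Theta$, so this direction needs no extra bookkeeping.

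For the ``if'' direction I would run the implication in reverse: given $H_1,H_2 \in \opclassbr$, choose $\Phi,\Theta$ as above so that $H_1 - H_2 \in \mc H_{M_\Phi \cup M_\Theta}$, and invoke the (homogeneous) identifiability of that subspace with its constants $\alpha_{\Phi,\Theta},\beta_{\Phi,\Theta}$ to bound $\norm[]{H_1\prsig - H_2\prsig}$ from both sides by multiples of $\norm[\opclass]{H_1 - H_2}$. The one point that genuinely requires attention—and the only real obstacle—is uniformity: Definition~\ref{def:defstableidentification} demands a single pair of constants valid for all of $\opclassbr$, whereas a priori the constants $\alpha_{\Phi,\Theta},\beta_{\Phi,\Theta}$ depend on the chosen supports. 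Here I would use that $\Sigma = \{(0,0),\dots,(L-1,L-1)\}$ is finite, so there are only finitely many admissible pairs $(\Phi,\Theta)$; setting $\alpha \defeq \min_{\Phi,\Theta}\alpha_{\Phi,\Theta}$ and $\beta \defeq \max_{\Phi,\Theta}\beta_{\Phi,\Theta}$ over this finite collection keeps $0 < \alpha \le \beta < \infty$ and yields constants that work simultaneously for every $H_1,H_2 \in \opclassbr$. This finiteness of the active-cell index set $\Sigma$ is precisely what legitimizes the reduction to the subspaces $\mc H_{M_\Phi \cup M_\Theta}$, mirroring the union-of-subspaces characterization of \cite[Prop.~2]{lu_theory_2008}.
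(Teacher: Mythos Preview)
Your proof is correct and follows essentially the same route as the paper's: both hinge on the identity
\[
\{H_1 - H_2 : H_1,H_2 \in \opclassbr\} = \bigcup_{\area(M_\Phi),\,\area(M_\Theta)\le\SSF} \mc H_{M_\Phi \cup M_\Theta},
\]
together with the fact that each $\mc H_{M_\Phi \cup M_\Theta}$ is a linear subspace, so that Definition~\ref{def:defstableidentification} reduces to its homogeneous form there. The paper's proof presents this as a single chain of equivalences with one pair of constants $\alpha,\beta$ carried throughout, whereas you split into two directions.

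The one point on which you are more explicit than the paper is the uniformity of constants in the ``if'' direction: the lemma's hypothesis, read literally through Definition~\ref{def:defstableidentification}, gives a priori different constants $\alpha_{\Phi,\Theta},\beta_{\Phi,\Theta}$ for each subspace, and you correctly invoke the finiteness of $\Sigma$ (hence of the collection of admissible $(\Phi,\Theta)$) to pass to a single pair via $\min$/$\max$. The paper's argument keeps the same $\alpha,\beta$ from the start and thus sidesteps this issue without comment; your treatment makes the step visible. Either way the argument goes through, and the substance is the same.
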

\begin{proof}
First, note that the set of differences of operators in $\opclassbr$ can equivalently be expressed as
\begin{align}
\{H \colon H = H_1-H_2, \;H_1, H_2\in \opclassbr\}  
= \bigcup_{M_\Phi, M_\Theta \colon \area(M_\Phi), \area(M_\Theta) \leq \SSF}  \mc H_{M_\Phi \cup M_\Theta}.
\label{eq:eqsetdiff} 
\end{align}
From Definition \ref{def:defstableidentification} it now follows that $\prsig$ identifies $\opclassbr$ if there exist constants $0<\alpha \leq \beta < \infty$ such that for all $H \in \bigcup_{M_\Phi, M_\Theta \colon \area(M_\Phi), \area(M_\Theta) \leq \SSF}  \mc H_{M_\Phi \cup M_\Theta}$ 
we have
\begin{equation}
\alpha \norm[\opclass]{H} \leq \norm[]{ H \prsig  } \leq \beta \norm[\opclass]{ H }.
\label{eq:normineqinlem}
\end{equation}
Next, note that for $H_1,H_2 \in \mc H_{M_\Phi \cup M_\Theta}$, we have that $H_1-H_2 \in \mc H_{M_\Phi \cup M_\Theta}$. We can therefore conclude that \eqref{eq:normineqinlem} is equivalent to 
\begin{align}
\alpha \norm[\opclass]{H_1-H_2 } \leq \norm[]{ H_1 \prsig - H_2 \prsig  } \leq \beta \norm[\opclass]{ H_1-H_2 }
\label{eq:somewinlem1}
\end{align}
for all $H_1, H_2 \in \mc H_{M_\Phi \cup M_\Theta}$, and for all $M_\Phi$ and $M_\Theta$ with $ \area(M_\Phi), \area(M_\Theta) \leq \SSF$. Recognizing that \eqref{eq:somewinlem1} is nothing but saying that $\prsig$ identifies $\mc H_{M_\Phi \cup M_\Theta}$ for all $M_\Phi$ and $M_\Theta$ with $ \area(M_\Phi), \area(M_\Theta) \leq \SSF$, the proof is concluded. 
\end{proof}

Necessity in Theorem  \ref{th:sufficiency} now follows by choosing $M_\Phi,M_\Theta$ such that $M_\Phi \cap M_\Theta = \emptyset$ and $\area(M_\Phi)=\area(M_\Theta)=\SSF>1/2$. 
This implies $\area ( M_\Phi \cup M_\Theta  ) > 1$ and hence application of Theorem \ref{pr:necessity} to the corresponding set $\mc H_{M_\Phi \cup M_\Theta}$ establishes that $\mc H_{M_\Phi \cup M_\Theta}$ is not identifiable. By Lemma \ref{pr:blind_identification} this then implies that $\opclassbr$ is not identifiable.


\subsection{Sufficiency \label{sec:suff}}

We provide a constructive proof of sufficiency by finding a probing signal $\prsig$ that identifies $\opclassbr$, and showing how $\sfunc$ can be obtained from $H\prsig$. 
Concretely, we take $\prsig$ to be a weighted $\csamp L$-periodic train of Dirac impulses
\begin{equation}
\prsig(t) = \sum_{k \in \mb Z}  c_k  \delta(t  +   k\csamp ), \quad c_k = c_{k+L}, \; \forall k \in \mb Z.
\label{eq:probingsignal}
\end{equation}
The specific choice of the coefficients $\mbf c = \transp{[c_0,...,c_{L-1}]}$ 
will be discussed later. 

Kailath \cite{kailath_measurements_1962} and Kozek and Pfander \cite{kozek_identification_2005} used an unweighted train of Dirac impulses as probing signal to prove that LTV systems with spreading function compactly supported on a  rectangle (known prior to identification) of area $\SSF\leq 1$ are identifiable. Pfander and Walnut \cite{pfander_measurement_2006} used the probing signal \eqref{eq:probingsignal} to prove the result reviewed as Theorem \ref{pr:necessity} in this paper. Using a {\it weighted} train of Dirac impulses will turn out crucial  in the case of unknown spreading function support region, as considered here. 
It was shown recently \cite[Thm.~2.5]{krahmer_local_2012} that identification in the case of known support region, i.e., for $\opclassr$, is possible only with probing signals that decay neither in time nor in frequency, making Dirac trains a natural choice. 

The main idea of our proof is to i)   
reduce the identification problem to that of solving a continuously indexed linear system (of $L$ equations in $L^2$ unknowns), and ii) based on Lemma \ref{pr:blind_identification} to show that the solution of this underdetermined linear system of equations is unique whenever $\SSF \leq \sfrac{1}{2}$, provided that $\mbf c$ is chosen appropriately. 

We start by computing the response of $H$ to $\prsig(t)$ in \eqref{eq:probingsignal}. From \eqref{eq:hilbertschmidt_s} we get
\begin{align}
y(t) 	&= (H \prsig) (t)  
	        =  \sum_{k \in \mb Z}  c_k \int_\nu   \sfunc(t + k \csamp  ,\nu)  e^{j 2 \pi \nu t} d\nu. \label{eq:dirac}	        
\end{align}
Next, we use the Zak transform \cite{janssen_zak_1988} to turn \eqref{eq:dirac} into a continuously indexed linear system of equations as described in Step i) above. The Zak transform (with parameter $\csamp L$) of the signal $y(t)$ is defined as
\begin{equation}
\mc Z_{y}(t ,f)   \nonumber \defeq \sum_{m \in \mb Z} y(t - m \csamp L ) e^{j2\pi  m \csamp L f} 
\end{equation}
for $(t,f) \in  [0, \csamp L)  \times [0,1/(\csamp L))$,  
and satisfies the following (quasi-)periodicity properties 
\begin{align*}
\mc Z_{y}(t +\csamp L ,f)          &= e^{j 2 \pi  \csamp L f} \mc Z_{y}(t ,f),\\
\mc Z_{y}(t ,f  + 1/(\csamp L) ) &= \mc Z_{y}(t ,f).
\end{align*}
It is therefore sufficient to consider $\mc Z_{y}(t ,f)$ on the fundamental rectangle $[0, \csamp L)  \times [0,1/(\csamp L))$. The Zak transform is an isometry, i.e., it satisfies
\begin{equation}
\csamp L \int_0^{\csamp L}\!\!\!\! \int_0^{1/(\csamp L)} |\mc Z_{y}(t ,f)|^2 = \norm{y}^2.
\label{eq:unitarityzak}
\end{equation}
The Zak transform of $y(t)$ in \eqref{eq:dirac} is given by
\begin{align}
\mc Z_{y} (t,f) 
&=\!\!\! \sum_{k,m \in \mb Z} \!\!\! c_k  \!\! \int_\nu \!\!  \sfunc(t - m \csamp L+k \csamp ,\nu)  e^{j 2 \pi \nu (t -m \csamp L)} d\nu \hspace{0.045cm} e^{j2\pi m \csamp L  f}  \nonumber \\
&= \! \! \sum_{k' \in \mb Z}  \!c_{k'} \!\!  \int_\nu \!\!\!  \sfunc(t+ k' \csamp ,\nu)  e^{j 2 \pi \nu t}  \!  \sum_{m \in \mb Z} \! e^{-j 2 \pi (\nu-f) m \csamp L}  d\nu \hspace{0.01cm} \, \label{eq:substkd2} \\
&= \! \! \sum_{k \in \mb Z} \! c_{k} \!\!  \int_\nu \!\!\!  \sfunc(t+ k \csamp ,\nu)  e^{j 2 \pi \nu t}    \frac{1}{TL}\! \sum_{m\in \mb Z}\! \delta\left(\nu \!- \! \left(f \! + \! \frac{m}{TL}\right) \right)  d\nu \hspace{0.01cm} \, \label{eq:substkd3} \\
&= \sum_{k \in \mb Z}   \frac{c_{k}}{\csamp L} \sum_{m \in \mb Z} \sfunc \!  \left(t+ k \csamp, f + \frac{m}{\csamp L} \right)  e^{j 2 \pi t \left( f + \frac{m}{\csamp L} \right) } \label{eq:lstepcktl}
\end{align}
where we used the substitution $k'= k-mL$ in \eqref{eq:substkd2} and \eqref{eq:substkd3} follows from $\sum_{m \in \mb Z} e^{-j 2 \pi (\nu-f) m \csamp L} = \frac{1}{TL} \sum_{m\in \mb Z} \delta\left(\nu - \left(f+\frac{m}{TL}\right) \right)$. 
Next, we split the fundamental rectangle $[0, \csamp L)  \times [0,1/(\csamp L))$ into $L$ cells $\Un$, where $\Un = \left[ 0, \csamp  \right) \times \left[0, 1/(\csamp L)  \right)$  was defined in Section \ref{sec:prform} in the context of  structural assumptions imposed on the spreading function. Concretely, we substitute $t = t'+ p\csamp$ in \eqref{eq:lstepcktl}, with $p \in \{0,...,L-1\}$ and $t' \in [0,\csamp)$. This yields, for $(t',f) \in \Un$ and $p = 0,...,L-1$,
\begin{align}
z_p(t',f)  
&\defeq  \mc Z_{y}( t'+ p\csamp ,f)  \label{eq:defpz}\\
&=  \sum_{k \in \mb Z}   \frac{c_{k} }{\csamp L}   \sum_{m\in \mb Z} \!\!  \sfunc \! \left(t'+p \csamp  + k \csamp  , f + \frac{m}{\csamp L}  \right) e^{j 2 \pi (t'+ p \csamp ) \left(f + \frac{m}{\csamp L} \right) } \nonumber \\
&=  \sum_{k' \in \mb Z}  \frac{c_{k'-p} }{\csamp L}  \sum_{m \in \mb Z}  \sfunc \! \left(t'+ k' \csamp , f + \frac{m}{\csamp L}  \right) e^{j 2 \pi (t'+p \csamp) \left(f + \frac{m}{\csamp L} \right) } \nonumber \\
&= \sum_{k =0}^{L-1} \frac{c_{k-p} }{\csamp L}   \sum_{m = 0}^{L-1} \sfunc \! \left( t'+ k \csamp, f + \frac{m}{\csamp L}  \right) e^{j 2 \pi (t'+ p \csamp) \left(f + \frac{m}{\csamp L}  \right) } \label{eq:lszptdf}
\end{align}
where 
\eqref{eq:lszptdf} is a consequence of $\sfunc(\tau , \nu ) = 0$ for $(\tau,\nu) \notin \left[0, \csamp L \right) \times \left[0, 1/\csamp \right)$, by assumption. 
We next rewrite \eqref{eq:lszptdf} in vector-matrix form. To this end, we define the column vectors $\mbf z(t,f)$ and $\mbf s(t,f)$ according to
\begin{equation}
[\mbf z(t,f)]_p \defeq  \csamp L\, z_{p}(t,f)e^{-j2\pi p \csamp f} , \quad p=0,...,L-1 
\label{eq:defzvect}
\end{equation}
and
$\mbf s(t,f)  \defeq [s_{0,0}(t,f),s_{0,1}(t,f),\allowbreak  ...,\allowbreak  s_{0,L-1}(t,f), \allowbreak s_{1,0}(t,f),..., s_{L-1,L-1}(t,f)]^T$
with $s_{k,m}(t,f)$ as defined in \eqref{def:skm}. 
Since $\sfunc(\tau , \nu ) = 0$ for $(\tau,\nu) \notin \left[0, \csamp L \right) \times \left[0, 1/\csamp \right)$, the vector $\mbf s(t,f)$, $(t,f) \in \Un$, fully characterizes the spreading function $\sfunc(\tau,\nu)$. 
With these definitions \eqref{eq:lszptdf} can now be written as
\begin{equation}
\mbf z(t,f) =   \mbf A_{\mbf c}  \mbf s(t,f), \quad  (t,f) \in  U
\label{eq:sysofeq}
\end{equation}
with the $L\times L^2$ matrix
\begin{equation}
\mbf A_{\mbf c} \defeq  [\mbf A_{\mbf c, 0} | \; ...  \; | \mbf A_{\mbf c, L-1}  ], \quad
\mbf A_{\mbf c, k} \defeq  \mbf C_{\mbf c, k} \herm{\mbf F}
\label{eq:defAc}
\end{equation}
where 
$[\mbf F]_{p,m} =e^{-j2\pi \frac{pm}{L}}, \; p,m=0,...,L-1$, and $\mbf C_{\mbf c, k}$ is the $L\times L$ diagonal matrix with diagonal entries $\{c_{k}, c_{k-1}, ..., c_{k -L+1}\}$ (recall that the coefficient sequence $c_k$ is $L$-periodic).  

Since $\mbf z(t,f)$ is obtained from the operator's response to the probing signal and $\mbf s(t,f)$ fully determines the spreading function $\sfunc$, we can conclude that the identification of $H$ has been reduced to the solution of a continuously indexed linear system of equations. Conceptually, for each pair $(t,f) \in \Un$, we need to solve a linear system of $L$ equations in $L^2$ unknowns. 
The proof is then completed by showing that this continuously indexed linear system of equations has a unique solution if $\SSF \leq 1/2$. 
More formally, we need to relate identifiability according to Definition \ref{def:defstableidentification} to solvability of the continuously indexed linear system of equations \eqref{eq:sysofeq}. 
To this end, we first note that thanks to Lemma \ref{pr:blind_identification}, it suffices to prove identifiability of $\mc H_{M_\Phi \cup M_\Theta}$ for all pairs $M_\Phi,M_\Theta$ with 
$\area(M_\Phi) \leq \sfrac{1}{2}$ and $\area(M_\Theta) \leq \sfrac{1}{2}$. By setting $M_\Gamma = M_\Phi \cup M_\Theta$ this is equivalent to proving identifiability of $\mc H_{M_\Gamma}$ for all $M_\Gamma$ with $\area(M_\Gamma)\leq 1$. 
For $H\in \opclassr$, by definition, $s_{k,m}(t,f) = 0, \; \forall  (k,m) \notin \Gamma$.  Denote the restriction of the vector $\mbf s(t,f)$ to the entries corresponding to the active cells, i.e., the cells indexed by $\Gamma$, by $\mbf s_\Gamma(t,f)$ and let $\mbf A_\Gamma$ be the matrix containing the columns of  $\mbf A_{\mbf c}$ that correspond to the index set $\Gamma$. The linear system of equations \eqref{eq:sysofeq} then reduces to
\begin{equation}
\mbf z(t,f) =  \mbf A_{\Gamma}  \mbf s_\Gamma(t,f), \quad (t,f) \in  \Un.
\label{eq:sysofeq_rest}
\end{equation} 
Solvability of \eqref{eq:sysofeq_rest} can now formally be related  to identifiability through the following lemma, proven in Appendix \ref{app:lem2}. 
\begin{lemma}
Let $\prsig$ be given by \eqref{eq:probingsignal}. Then, the (tightest) bounds $\alpha, \beta$ in \eqref{eq:cond_stable} for the set of operators $\opclassr$ are given by
\begin{equation}
 \alpha_\Gamma = \frac{1}{\sqrt{TL}} \inf_{\norm[2]{\mbf v} = 1} \norm[2]{ \mbf A_{\Gamma} \mbf v} \;\text{ and } \;\beta_\Gamma = \frac{1}{\sqrt{TL}} \sup_{\norm[2]{\mbf v} = 1} \norm[2]{ \mbf A_{\Gamma} \mbf v}.
\label{eq:stabbounds}
\end{equation}
\label{le:boundedness_classr}
\end{lemma}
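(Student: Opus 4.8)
The plan is to turn both sides of the stability inequality \eqref{eq:cond_stable} into integrals over the single cell $\Un$ of squared Euclidean norms, after which the claim reduces to a pointwise singular-value bound for $\mbf A_\Gamma$ followed by integration. Two ingredients drive this: the isometry \eqref{eq:unitarityzak} of the Zak transform, which lets me rewrite $\norm{H\prsig} = \norm{y}$ in terms of $\mbf z(t,f)$, and the fact that, for the triplet we work with, the operator norm $\norm[\opclass]{H}$ equals the $\Ltwo$-norm of the spreading function by \eqref{eq:defHSnorm}, which I will express through $\mbf s_\Gamma(t,f)$.

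First I would handle the output-side norm. Applying \eqref{eq:unitarityzak} and splitting the fundamental rectangle $[0,\csamp L) \times [0, 1/(\csamp L))$ into the $L$ cells via $t = t' + p\csamp$, $p = 0,\ldots,L-1$, gives $\norm{y}^2 = \csamp L \int_\Un \sum_{p=0}^{L-1} |z_p(t,f)|^2 \, dt\, df$ with $z_p$ as in \eqref{eq:defpz}. Since $[\mbf z(t,f)]_p = \csamp L\, z_p(t,f) e^{-j2\pi p\csamp f}$ by \eqref{eq:defzvect} and the modulating exponential has unit modulus, $\sum_p |z_p(t,f)|^2 = (\csamp L)^{-2} \norm[2]{\mbf z(t,f)}^2$. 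Substituting the reduced system \eqref{eq:sysofeq_rest} then yields
\[
\norm{H\prsig}^2 = \frac{1}{\csamp L} \int_\Un \norm[2]{\mbf A_\Gamma \mbf s_\Gamma(t,f)}^2 \, dt\, df.
\]

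Next I would handle the Hilbert--Schmidt side. The support of $\sfunc$ lies in $[0,\csamp L) \times [0, 1/\csamp)$, which tiles exactly into $L^2$ translates of $\Un$ indexed by $(k,m)$; on the cell indexed by $(k,m)$ the change of variables $\tau = t + k\csamp$, $\nu = f + m/(\csamp L)$ together with $|s_{k,m}(t,f)| = |\sfunc(t + k\csamp, f + m/(\csamp L))|$ (the phase in \eqref{def:skm} again has unit modulus) shows
\[
\norm[\opclass]{H}^2 = \int_\Un \sum_{k,m} |s_{k,m}(t,f)|^2 \, dt\, df = \int_\Un \norm[2]{\mbf s_\Gamma(t,f)}^2 \, dt\, df,
\]
the last step using $s_{k,m} \equiv 0$ for $(k,m) \notin \Gamma$. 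With both norms expressed as cell integrals, the pointwise sandwich $\minsingv{\mbf A_\Gamma}^2 \norm[2]{\mbf s_\Gamma(t,f)}^2 \leq \norm[2]{\mbf A_\Gamma \mbf s_\Gamma(t,f)}^2 \leq \maxsingv{\mbf A_\Gamma}^2 \norm[2]{\mbf s_\Gamma(t,f)}^2$ and integration over $\Un$ establish that $\alpha_\Gamma = \minsingv{\mbf A_\Gamma}/\sqrt{\csamp L}$ and $\beta_\Gamma = \maxsingv{\mbf A_\Gamma}/\sqrt{\csamp L}$ are valid constants in \eqref{eq:cond_stable}.

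Finally, to see these constants are the tightest, I would exhibit extremal operators. The correspondence between spreading functions supported on $M_\Gamma$ and vector-valued maps $\mbf s_\Gamma\colon \Un \to \complexset^{|\Gamma|}$ is a bijection (invert \eqref{def:skm} cellwise), so I may prescribe $\mbf s_\Gamma(t,f)$ to be the constant right singular vector of $\mbf A_\Gamma$ attaining the infimum (resp.\ supremum); for such a choice the pointwise inequalities hold with equality and the ratio $\norm{H\prsig}^2/\norm[\opclass]{H}^2$ equals $\alpha_\Gamma^2$ (resp.\ $\beta_\Gamma^2$), so no larger $\alpha$ (smaller $\beta$) can work. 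The main obstacle I anticipate is keeping the two tilings consistent---the Zak-domain splitting of the output and the direct tiling of $\supp(\sfunc)$---and, for the tightness direction, checking that the prescribed extremal $\mbf s_\Gamma$ corresponds to an admissible spreading function in $\sfuncspace$; since a cellwise-constant choice need not lie in $S_0(\mb R^2)$, I would if necessary replace it by a smooth approximant and pass to the limit, which leaves $\alpha_\Gamma,\beta_\Gamma$ unchanged.
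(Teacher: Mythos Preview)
Your proposal is correct and follows essentially the same route as the paper's proof in Appendix~\ref{app:lem2}: both compute $\int_\Un \norm[2]{\mbf z(t,f)}^2\,d(t,f)$ via the Zak isometry \eqref{eq:unitarityzak} and the cell decomposition, compute $\int_\Un \norm[2]{\mbf s_\Gamma(t,f)}^2\,d(t,f) = \norm[\opclass]{H}^2$ via \eqref{eq:defHSnorm} and \eqref{def:skm}, and then sandwich with the extremal singular values of $\mbf A_\Gamma$. You in fact go a step further than the paper, which stops after deriving the bounds and does not explicitly verify the ``tightest'' claim; your extremal-operator argument (with the smooth-approximant caveat for $\sfuncspace = S_0(\mb R^2)$) fills that gap.
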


The proof of sufficiency in Theorem \ref{th:sufficiency} is now completed by showing that for all $M_\Gamma$ with $\area(M_\Gamma)\leq 1$, $\opclassr$ is identifiable, i.e., 
$0<\alpha_\Gamma \leq \beta_\Gamma < \infty$. 
By Lemma \ref{le:boundedness_classr}, $\beta_\Gamma < \infty$ trivially, and showing that $\alpha_\Gamma>0$ amounts to proving that $\mbf A_\Gamma$ has full rank for all $\Gamma \subseteq \Sigma$ with $|\Gamma|\leq L$, i.e., for all $M_\Gamma$ such that $\area(M_\Gamma)\leq 1$.  
What comes to our rescue is \cite[Thm. 4]{lawrence_linear_2005} which states that for almost all $\mbf c$, each\footnote{
Pfander and Walnut \cite{pfander_measurement_2006} used the probing signal \eqref{eq:probingsignal} to prove that, for known spreading function support region, $\SSF \leq 1$ is sufficient for identifiability. The  crucial difference between \cite{pfander_measurement_2006} and our setup is that we need {\it each} submatrix of $L$ columns  of $\mbf A_{\mbf c}$ to have full rank, as we do not assume prior knowledge of the support region. 
}
 $L\times L$ submatrix of $\mbf A_{\mbf c}$ has full rank. 
In the remainder of the paper $\mbf c$ is chosen such that each $L\times L$ submatrix of $\mtx{A}_{\ve{c}}$, indeed, has full rank. In other words, $\mbf c$ is chosen such that $\spark(\mtx{A}_{\ve{c}}) = L+1$.

\subsection{\label{sec:relspecbl}Relation to spectrum-blind sampling}

The philosophy of operator identification without prior knowledge of the spreading function's support region is related to the idea of spectrum-blind sampling of multi-band signals \cite{feng_spectrum-blind_1996,feng_universal_1997,lu_theory_2008,mishali_blind_2009}. In spectrum-blind sampling the central problem is to recover a signal, sparsely supported on a priori unknown frequency bands, from its samples taken at a rate that is (much) smaller than the Shannon-Nyquist rate of the signal. 
 The conceptual relation between operator identification and spectrum-blind sampling is brought out by comparing \eqref{eq:sysofeq} to the recovery equation in spectrum-blind sampling, given by \cite{feng_spectrum-blind_1996,feng_universal_1997,lu_theory_2008,mishali_blind_2009}
 \begin{align}
 \vect{y}(f) = \vect{A} \vect{x}(f),\quad f\in \mathcal F.
 \label{eq:specblindreconst}
 \end{align} 
 Here $\vect{A}\in \complexset^{m\times n}$, with $m < n$, depends on the sampling pattern, $\vect{x}(f), f\in \mathcal F$, fully specifies the signal to be reconstructed, and $\vect{y}(f), f\in \mathcal F$, is obtained from the samples of the signal. Further, $\mathcal F$ is a spectral ``cell'', playing a role similar to the cell $\Un$ in our setup.  
It is shown in \cite{feng_spectrum-blind_1996,feng_universal_1997,lu_theory_2008,mishali_blind_2009} that the penalty for not knowing the spectral support set is a factor-of-two in sampling rate. The corresponding result in the present paper is Theorem \ref{th:sufficiency}. 
It is furthermore shown in \cite{feng_spectrum-blind_1996,bresler_spectrum-blind_2008} that there is no penalty for not knowing the spectral support set if one requires recovery of \emph{almost all} signals only. The corresponding result in this paper is Theorem \ref{th:almost_all}. 
Despite this strong structural similarity, there is a fundamental difference between spectrum blind sampling and the system identification problem considered here. In operator identification a function of \emph{two} variables, $\sfunc(\tau,\nu)$, has to be extracted from the \emph{univariate} measurement $(H\prsig)(t)$. Moreover, in spectrum-blind sampling there is no limit on the cardinality of the spectral support set that would parallel the $\SSF \leq 1/2$ or $\SSF < 1$ thresholds. 

\section{Recovering the spreading function \label{sec:reconstruct} }
We next present an algorithm that provably recovers all $H\in \opclassbr$ for $\SSF\leq 1/2$ from the operator's response $H\prsig$ to the probing signal $\prsig(t)$ in \eqref{eq:probingsignal}. 
The algorithm first identifies the support set of $\sfunc(\tau,\nu)$, i.e., the index set $\Gamma$, and then solves the corresponding linear system of equations \eqref{eq:sysofeq_rest}, which, based on \eqref{def:skm}, yields $\sfunc(\tau,\nu)$. 
Starting from \eqref{eq:sysofeq_rest}, an explicit reconstruction formula for $\sfunc(\tau,\nu)$ is straightforward to derive and is given by
\begin{align}
\sfunc \! \left( t + k  \csamp , f +   \frac{m}{\csamp L}\right)  = \csamp L \sum_{p=0}^{L-1} [\pinv{ \mbf A}_\Gamma]_{l,p}  
\, z_{p}(t,f)e^{-j2\pi  \left( p \csamp f   + \left(f+ \frac{m}{\csamp L} \right)t \right) }
\label{eg:endpiecwise}
\end{align}
for $(k,m) \in \Gamma$, $(t,f)\in \Un$, where $\pinv{\mbf A}_\Gamma$ is the pseudoinverse of $ \mbf A_\Gamma$ and the index $l$ refers to the row of $\pinv{\mbf A}_\Gamma$ corresponding to the $(k,m)$th cell. 

\newcommand{\IMV}[0]{(\text{P0})} 
\newcommand{\IMVM}[0]{(\text{$\overline{\text{P0}}$})} 
\newcommand{\MMV}[0]{(\text{P0$'$})}	
\newcommand{\MMVN}[0]{(\widetilde{\text{P0}})}
\newcommand{\Bs}[1]{\mtx{B}_{{#1}}} 

We now turn our attention to the main challenge, namely support set recovery. 
Formally, \eqref{eq:sysofeq} is a continuously indexed linear system of equations, whose solutions (across indices $(t,f) \in \Un$) share the support set $\Gamma$. This problem was studied before under the name of ``infinite measurement vector problem'' in \cite{mishali_reduce_2008} as a generalization of the multiple measurement vector (MMV) problem \cite{jie_chen_theoretical_2006}, where the reconstruction of a finite number of vectors sharing a sparsity pattern, from a finite number of linear measurements, is considered. 
Starting from the observation that the cardinality of the index set $\Gamma$ is finite, and the matrix $\vect{A}_\vect{c}$ is finite-dimensional, it is perhaps not surprising to see that the infinite measurement vector problem at hand can be reduced to an MMV problem. 
Based on the recovery equation \eqref{eq:specblindreconst}, this was recognized before in the context of spectrum-blind sampling in \cite{feng_spectrum-blind_1996,mishali_blind_2009,bresler_spectrum-blind_2008} and, in a more general context, in \cite{mishali_reduce_2008}. 
We next present a general reduction method, which unifies the approaches in \cite{feng_spectrum-blind_1996,mishali_blind_2009,mishali_reduce_2008,bresler_spectrum-blind_2008} and is based on a simplified, and, as we believe, more accessible treatment. The discussion  in Section \ref{sec:redtommv} below is therefore of interest in its own right. 

We assume throughout that $|\Gamma| \leq L$; this is w.l.o.g. as $|\Gamma| \leq L$ corresponds to $\SSF \leq 1$ and we only consider the identification of operators satisfying $\SSF \leq 1/2$ or $\SSF < 1$. 
The index set $\Gamma$ can be recovered as follows: 
\[
\IMV \; \begin{cases} 
		\text{minimize} 	& |\Gamma| \\
		\text{subject to} 	& \mbf z(t,f) = \mbf A_{\Gamma}  \mbf s_\Gamma(t,f), \quad (t,f) \in \Un,
\end{cases}
\]
where the minimization is performed over all $\Gamma \subseteq \Sigma$ and all corresponding $\mbf s_\Gamma(t,f)\colon \Un^{|\Gamma|} \to \complexset$.

\subsection{\label{sec:redtommv}Reduction to an MMV problem}
The proof of $\IMV$ delivering the correct solution is deferred to Section \ref{se:uniqueness}. We first develop a unified approach to the reduction of the infinite measurement vector problem $\IMV$ to an MMV problem. 
We emphasize, as mentioned before, that this reduction approach encompasses the settings in \cite{feng_spectrum-blind_1996,mishali_blind_2009,mishali_reduce_2008,bresler_spectrum-blind_2008} and hence applies to spectrum-blind sampling, inter alia. 
Our approach is based on a basis expansion of the elements of $\mbf z(t,f)$ and $\mbf s_\Gamma(t,f)$. We start with some definitions.  
Consider the linear space of functions $\mathcal G = \{g(t,f)\colon \Un \to \complexset \}$ equipped with the inner product $\innerprod{g_1}{g_2} = \int_\Un g_1(t,f) \conj{g}_2(t,f) d(t,f)$, $g_1,g_2 \in \mathcal G$, and induced norm $\norm{g} = \sqrt{\innerprod{g}{g}}$. 
Let $\{b_0(t,f),\allowbreak...,\allowbreak b_{K-1}(t,f) \in \mathcal G \}$ be a basis (not necessarily orthogonal) for the space spanned by the functions 
$\{[\mbf z(t,f)]_p, p=0,...,L-1\}$ and set $K =\dim \spanof \{[\mbf z(t,f)]_p, p  =0,...,L-1\}$. We can represent $\vect{z}(t,f)$  in terms of the basis elements $b_i(t,f)$ according to
\begin{equation}
\mbf z(t,f) =  \mtx{B}_{z} \vect{b}(t,f)
\label{eq:relzbb}
\end{equation}
where $\vect{b}(t,f) \defeq \transp{[b_0(t,f),...,b_{K-1}(t,f)]}$ and $\mtx{B}_{z} \in \mathbb C^{L\times K}$ contains the expansion coefficients of $\mbf z(t,f)$ in the basis $\{b_0(t,f),...,b_{K-1}(t,f)\}$. 
It follows from $K =\dim \spanof \{ [\mbf z(t,f)]_p, p  =0,...,L-1 \}$ that $\mtx{B}_{z}$ has full rank $K\leq L$. 
To see this, suppose that $\rank{\mtx{B}_{z}} < K$. Then, each set of $K$ rows of $\mtx{B}_{z}$ is linearly dependent, i.e., for each set of rows of  $\mtx{B}_{z}$, indexed by say $\Phi$,  with cardinality $|\Phi|=K$, there exists an $\vect{a}\in \mathbb C^K$, $\vect{a}\neq \vect{0}$, such that 
\begin{equation}
\herm{\vect{a}} \mtx{B}_{z}^\Phi = \vect{0}
\label{eq:ambz}
\end{equation}
where $\mtx{B}_{z}^\Phi$ is the matrix obtained by retaining the rows of $\mtx{B}_{z}$ in $\Phi$. Then, for each $\Phi$ with $|\Phi| = K$, according to \eqref{eq:ambz}, there exists an $\vect{a}\neq \vect{0}$ such that  
\[
\herm{\vect{a}} \mtx{B}_{z}^\Phi \vect{b}(t,f) = \herm{\vect{a}} \mbf z_\Phi(t,f)   = 0 
\]
where $\mbf z_\Phi(t,f)$ contains the entries of $\mbf z(t,f)$ corresponding to the index set $\Phi$. 
This would, however, imply $\dim \spanof \{[\mbf z(t,f)]_p, p  =0,...,L-1 \} < K$, which stands in contradiction to $\dim \spanof \{\allowbreak [\mbf z(t,f)]_p, p  =0,...,L-1 \} = K$. 
%

Expanding $\vect{s}_\Gamma(t,f)$ in \eqref{eq:sysofeq_rest} in the basis\footnote{
Thanks to $\mbf A_\Gamma$ having full column rank, $\spanof\{s_{k,m}(t,f) \colon (k,m) \in \Gamma \} \subseteq \spanof \{b_k(t,f), k=0,...,K-1\}$ (cf.~\eqref{eq:sysofeq_rest} and \eqref{eq:relzbb}). 
} $\{b_0(t,f),...,\allowbreak b_{K-1}(t,f)\}$, we can rewrite the constraint in $\IMV$ as
\begin{equation}
 \mtx{B}_{z} \vect{b}(t,f) = \mbf A_{\Gamma} \Bs{\Gamma}
 \vect{b}(t,f), \quad (t,f) \in  \Un
\label{eq:sysofeq_inbasis}
\end{equation} 
where $\Bs{\Gamma} \in \mathbb C^{|\Gamma| \times K}$ contains the expansion coefficients of  $\mbf s_\Gamma(t,f)$ in the basis $\{b_0(t,f),...,\allowbreak b_{K-1}(t,f)\}$. 
Since the elements of $\vect{b}(t,f)$ form a basis, 
\eqref{eq:sysofeq_inbasis} 
is equivalent to 
\begin{equation}
\mtx{B}_{z} = \mbf A_{\Gamma} \Bs{\Gamma}.
\label{eq:sysofeq_incoeff}
\end{equation}
We have therefore shown that \IMV~is equivalent to 
\[
\MMVN \; \begin{cases} 
		\text{minimize} 	& |\Gamma| \\
		\text{subject to} 	& \mtx{B}_{z} =  \mbf A_{\Gamma} \Bs{\Gamma}
	\end{cases}
\]
where the minimization is performed over all $\Gamma \subseteq \Sigma$ and all corresponding $\Bs{\Gamma}  \in \mb C^{|\Gamma| \times K}$. 
We have thus reduced $\IMV$, which involves a continuum of constraints, to $\MMVN$, which involves only finitely many constraints. 
$\MMVN$ is known in the literature as the MMV problem \cite{jie_chen_theoretical_2006}, which is usually formulated equivalently as: minimize $\norm[\text{row-0}]{\mtx{B}_s}$ subject to $\mtx{B}_{z}=\mtx{A}_{\ve{c}} \mtx{B}_{s}$, where the constraint is over all $\mtx{B}_{s} \in \complexset^{L^2\times K}$ and $\norm[\text{row-0}]{\mtx{B}_{s}}$ is the number of non-zero rows of $\mtx{B}_{s}$.

We are now ready to explain the reduction approaches in \cite{feng_spectrum-blind_1996,feng_universal_1997,mishali_blind_2009,mishali_reduce_2008,bresler_spectrum-blind_2008} in the general reduction framework just introduced. We start with the method described in \cite{feng_spectrum-blind_1996,feng_universal_1997,mishali_blind_2009,bresler_spectrum-blind_2008} in the context of spectrum-blind sampling. This approach starts from a correlation matrix, which in our setup becomes 
\begin{align}
\mtx{C}_{z} \defeq \int_{\Un} \mbf z(t,f) \herm{\mbf z}(t,f) d(t,f).
\label{eq:corrmatxz}
\end{align} 
With \eqref{eq:sysofeq_rest} we can express $\mtx{C}_{z}$ as
\begin{equation}
\mtx{C}_{z}  =  \mbf A_{\Gamma} \mtx{C}_{\vect{s}_\Gamma}  \herm{\mbf A}_{\Gamma}
\label{eq:integral}
\end{equation}
where $\mtx{C}_{\vect{s}_\Gamma}  = \int_{\Un}   \mbf s_{\Gamma}(t,f)  \herm{\mbf s}_{\Gamma}(t,f) d( t,f)$. 
Analogously to the results in \cite[Sec. 3, Lem. 1]{feng_universal_1997} for signal recovery in spectrum-blind sampling, it can be shown that \IMV~is equivalent to
\[
\IMVM \; \begin{cases} 
		\text{minimize} 	& |\Gamma| \\
		\text{subject to} 	&  \mtx{C}_{z}  =  \mbf A_{\Gamma} \mtx{C}_{\vect{s}_\Gamma}   \herm{\mbf A}_{\Gamma}
\end{cases}
\]
 where the minimization is performed over all $\Gamma \subseteq \Sigma$ and all corresponding Hermitian $\mtx{C}_{\vect{s}_\Gamma}  \in \mb C^{|\Gamma| \times |\Gamma|}$. 

We next show that $\IMVM$ is equivalent to an MMV problem, and then explain this equivalence result in our basis expansion approach. 
 $\mtx{C}_{z}$ is a Hermitian matrix and can hence be decomposed as $\mtx{C}_{z}= \mbf Q \herm{\mbf Q}$ 
\cite[Thm.~4.1.5]{horn_matrix_1986}, where the $K = \rank{\mtx{C}_{z}}$ columns of $\mbf Q \in \mb C^{L \times K}$ are orthogonal. 
Analogously to \cite[Sec. 3, Lem. 1]{feng_universal_1997}, \cite[Sec. V-C]{mishali_blind_2009}, it can now be shown that \IMVM~(and by induction \IMV) is equivalent to the MMV problem 
\begin{equation*}
\MMV \; \begin{cases} 
		\text{minimize} 	& |\Gamma| \\
		\text{subject to} 	& \mbf Q = \mbf A_{\Gamma}  \mbf G_{\Gamma} 
\end{cases}
\end{equation*}
where the minimization is performed over all $\Gamma \subseteq \Sigma$ and all corresponding $\mbf G_{\Gamma} \in \mb C^{|\Gamma| \times K}$. 

To see how the reduction to $\MMV$ just described can be cast into the basis expansion approach described above, let $\mbf z(t,f) =  \mtx{B}_{z} \vect{b}(t,f)$, where $\vect{b}(t,f)$ is an orthonormal basis for $\spanof \{[\mbf z(t,f)]_p, p=0,...,L-1\}$. By \eqref{eq:corrmatxz}, we then have
\[
\mtx{C}_{z} =  \mtx{B}_{z} \left[ \int_{\Un} \vect{b}(t,f)  \herm{\vect{b}}(t,f) d(t,f) \right] \,\herm{\mtx{B}}_z  = \mtx{B}_{z} \herm{\mtx{B}}_z. 
\]
From 
$
\mtx{C}_{z} = \mtx{B}_{z} \herm{\mtx{B}}_z = \mtx{Q}\herm{\mtx{Q}}
$
it follows that there exists a unitary matrix $\mtx{U}$ such that $\mtx{B}_z = \mtx{Q}\vect{U}$, which is seen as follows. 
We first show that any solution $\mtx{B}$ to $\mtx{C}_{z} = \mtx{B} \herm{\mtx{B}}$ can be written as $\mtx{B} =\mtx{C}_{z}^{1/2}  \mtx{V} $, where $\mtx{V}$ is unitary \cite[Exercise on p. 406]{horn_matrix_1986}. 
Indeed, we have
\begin{align}
\mtx{I} 
&= \mtx{C}_{z}^{-1/2}  \mtx{C}_{z}^{1/2} \mtx{C}_{z}^{1/2} \mtx{C}_{z}^{-1/2} \nonumber \\
&= \mtx{C}_{z}^{-1/2}  \mtx{B} \herm{\mtx{B}} \mtx{C}_{z}^{-1/2} \nonumber \\
&= ( \mtx{C}_{z}^{-1/2} \mtx{B} ) \herm{( \mtx{C}_{z}^{-1/2} \mtx{B} )}
\label{eq:exhj}
\end{align}
where the last equality follows since $\mtx{C}_{z}^{-1/2}$ is self adjoint, according to \cite[Thm. 7.2.6]{horn_matrix_1986}. From  \eqref{eq:exhj} it is seen that $\mtx{V} \defeq\mtx{C}_{z}^{-1/2} \mtx{B}$ is unitary, and hence $\mtx{B} = \mtx{C}_{z}^{1/2} \mtx{V}$, with $\mtx{V}$ unitary. 
Therefore, we have $\mtx{B}_z = \mtx{C}_{z}^{1/2} \mtx{V}_1$ and $\mtx{Q} = \mtx{C}_{z}^{1/2} \mtx{V}_2$, where $\mtx{V}_1$ and $\mtx{V_2}$ are unitary, and hence $\mtx{B}_z = \mtx{Q} \herm{\mtx{V}}_2  \mtx{V}_1$. As $\herm{\mtx{V}}_2  \mtx{V}_1$ is unitary, we proved that there exists a unitary matrix $\mtx{U}$ such that $\mtx{B}_z = \mtx{Q}\vect{U}$. 
With $\mtx{B}_z = \mtx{Q}\vect{U}$, the minimization variable of $\MMVN$ is given by $\mbf B_\Gamma = \mbf G_\Gamma \mbf U$, where $\mbf G_\Gamma$ is the minimization variable of $\MMV$, hence $\MMVN$ and $\MMV$ are equivalent.

Another approach to reducing $\IMV$ to an MMV problem was put forward in \cite[Thm. 2]{mishali_reduce_2008}. 
In our setting and notation the resulting MMV problem is given by
\[
(\text{P0}'') \; \begin{cases} 
		\text{minimize} 	& |\Gamma| \\
		\text{subject to} 	&  \mbf W = \mbf A_{\Gamma} \mbf G_{\Gamma} 
\end{cases}
\]
where the minimization is performed over all $\Gamma \subseteq \Sigma$ and all corresponding $\mbf G_{\Gamma} \in \mb C^{|\Gamma| \times K}$. 
Here, the matrix $\mtx{W} \in \mb C^{L \times K}$ can be taken to be any matrix whose column span is equal to $\spanof \{\vect{z}(t,f)\colon (t,f) \in \Un \}$. 
To explain this approach in our basis expansion framework, we start by noting that \eqref{eq:relzbb} implies that $\spanof \{\vect{z}(t,f)\colon (t,f) \in \Un \} = \spanof ( \mtx{B}_z )$. We can therefore take $\mtx{W}$ to equal $\mtx{B}_z$. 
On the other hand,  for every $\mtx{W}$ with $\spanof(\mtx{W}) = \spanof \{\vect{z}(t,f)\colon (t,f) \in \Un \}$, we can find a basis $\vect{b}(t,f)$ such that $\mtx{W} \vect{b}(t,f)  = \vect{z}(t,f)$. 
Choosing different matrices $\mtx{W}$ in $(\text{P0}'')$ therefore simply amounts to choosing different bases $\vect{b}(t,f)$.

\subsection{Uniqueness conditions for $\IMV$ \label{se:uniqueness}}

We are now ready to study uniqueness conditions for $\IMV$. Specifically, we will find a necessary and sufficient condition for $\IMV$ to deliver the correct solution to the continuously indexed linear system of equations in \eqref{eq:sysofeq}. 
This condition comes in the form of a threshold on $|\Gamma|$ that depends on the ``richness'' of the spreading function, specifically, on $\dim \spanof \{ s_{k,m}(t,f), (k,m)\in \Gamma \}$. 
\begin{theorem}
Let $\mbf z(t,f) = \mbf A_{\Gamma}  \mbf s_\Gamma(t,f), \, (t,f) \in \Un$, with $\dim \spanof \{ s_{k,m}(t,f), (k,m)\in \Gamma \} = K$. Then $\IMV$ applied to $\mbf z(t,f)$ recovers  $(\Gamma,\mbf s_\Gamma(t,f))$ if and only if 
\begin{equation}
|\Gamma| < \frac{L+K}{2}.
\label{eq:unique_imv}
\end{equation}
\label{thm:uniqueness_IMV}
\end{theorem}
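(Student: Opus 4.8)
The plan is to first collapse the continuum of constraints in $\IMV$ to the finite system $\MMVN$, and then to recognize the claimed bound as the MMV uniqueness threshold $|\Gamma| < (\spark(\mbf A_{\mbf c}) - 1 + \rank{\mtx{B}_{z}})/2$ evaluated at $\spark(\mbf A_{\mbf c}) = L+1$. Concretely, I would reuse the basis-expansion equivalence of Section~\ref{sec:redtommv}: writing $\mbf z(t,f) = \mtx{B}_{z}\vect b(t,f)$ and $\mbf s_\Gamma(t,f) = \Bs{\Gamma}\vect b(t,f)$ in a basis $\vect b(t,f)$ of $\spanof\{[\mbf z(t,f)]_p\}$ turns $\IMV$ into the problem of finding the sparsest $\Gamma$ with $\mtx{B}_{z} = \mbf A_\Gamma\Bs{\Gamma}$, and recovering $(\Gamma,\mbf s_\Gamma)$ is equivalent to recovering $(\Gamma,\Bs{\Gamma})$. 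The one bookkeeping step is to identify the rank: because $|\Gamma|\le L$ and $\spark(\mbf A_{\mbf c}) = L+1$, the matrix $\mbf A_\Gamma$ has full column rank, so $\rank{\mtx{B}_{z}} = \rank{\Bs{\Gamma}} = \dim\spanof\{s_{k,m}(t,f),(k,m)\in\Gamma\} = K$. Hence it suffices to show that the $k$-row-sparse $\Bs{\Gamma}$, $k=|\Gamma|$, is the unique minimizer of $\MMVN$ if and only if $k < (L+K)/2$.

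For the ``if'' direction I would argue by contradiction. Assume a second admissible solution $(\Gamma',\Bs{\Gamma'})$ with $|\Gamma'|\le k$ and $\Bs{\Gamma'}\neq\Bs{\Gamma}$; embedding both into $L^2\times K$ matrices supported on $\Gamma$ and $\Gamma'$, their difference $\mtx V\neq\mtx 0$ satisfies $\mbf A_{\mbf c}\mtx V = \mtx 0$ columnwise. Two estimates then pin down the threshold. The first is a null-space/spark lemma: any nonzero $\mtx V$ with $\mbf A_{\mbf c}\mtx V=\mtx 0$ has $\norm[\text{row-0}]{\mtx V}\ge L + \rank{\mtx V}$, which follows because on its row support $\Omega$ the columns of $\mtx V$ lie in $\ker(\mbf A_\Omega)$, and full spark forces $\rank{\mbf A_\Omega}=\min(|\Omega|,L)$, so $\rank{\mtx V}\le |\Omega|-L$ once $|\Omega|>L$ (here $\mbf A_\Omega$ denotes the columns of $\mbf A_{\mbf c}$ indexed by $\Omega$). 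The second is a submatrix bound: since the alternative solution vanishes on $\Gamma\setminus\Gamma'$, we have $\mtx V|_{\Gamma\setminus\Gamma'} = \Bs{\Gamma}|_{\Gamma\setminus\Gamma'}$, and deleting the $c:=|\Gamma\cap\Gamma'|$ remaining rows from the rank-$K$ matrix $\Bs{\Gamma}$ lowers its rank by at most $c$, so $\rank{\mtx V}\ge K-c$. Combining with $\norm[\text{row-0}]{\mtx V}\le|\Gamma\cup\Gamma'|\le 2k-c$ yields $L+K-c\le 2k-c$, i.e.\ $2k\ge L+K$. Contrapositively, $k<(L+K)/2$ forces $\mtx V=\mtx 0$, establishing unique recovery.

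For the ``only if'' direction I would show the bound is sharp by constructing, whenever $2k\ge L+K$, an instance carrying the prescribed parameters on which $\IMV$ does not return the generating pair. Pick a cell set $\Omega$ with $|\Omega|=L+K$; since $\spark(\mbf A_{\mbf c})=L+1$, the space $\ker(\mbf A_\Omega)$ has dimension $K$, and stacking a basis of it as columns produces a matrix $\mtx V$ supported on $\Omega$ with $\rank{\mtx V}=K$ and $\mbf A_{\mbf c}\mtx V=\mtx 0$ (a generic such $\mtx V$ has all $L+K$ rows nonzero, because no coordinate can vanish on all of $\ker(\mbf A_\Omega)$). Splitting $\Omega=\Omega_1\sqcup\Omega_2$ with $|\Omega_1|=k$ and $|\Omega_2|=L+K-k\le k$, and choosing $\Omega_1$ so that it contains $K$ linearly independent rows of $\mtx V$, I set $\Bs{\Gamma}:=\mtx V|_{\Omega_1}$, $\Bs{\Gamma'}:=-\mtx V|_{\Omega_2}$, and $\mtx{B}_{z}:=\mbf A_{\mbf c}\mtx V|_{\Omega_1}$. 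Then $\mbf A_{\mbf c}\Bs{\Gamma}=\mbf A_{\mbf c}\Bs{\Gamma'}=\mtx{B}_{z}$ (using $\mbf A_{\mbf c}\mtx V=\mtx 0$), the two supports $\Omega_1\neq\Omega_2$ both have size $\le k$, and $\rank{\mtx{B}_{z}}=\rank{\mtx V|_{\Omega_1}}=K$. Thus the admissible instance with support $\Omega_1$ of size $k$ and rank $K$ admits a distinct solution of equal or smaller support, so $\IMV$ fails to recover it; this is exactly the converse of the MMV uniqueness statement.

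The main obstacle is the necessity construction, and within it the simultaneous control of two quantities: the alternative solution must stay within the sparsity budget $k$ (which is what the hypothesis $2k\ge L+K$ buys, via $|\Omega_2|=L+K-k\le k$) and it must reproduce the exact rank $K$, which forces the careful choice of $\Omega_1$ as a support inheriting a full-rank row block of the null-space matrix $\mtx V$. A secondary delicate point is the null-space/spark lemma of the sufficiency part, since it is precisely there that the value $\spark(\mbf A_{\mbf c})-1=L$ enters and produces the ``$L$'' in the threshold $(L+K)/2$.
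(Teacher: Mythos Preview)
Your proposal is correct and follows essentially the same route as the paper: reduce $\IMV$ to the finite MMV problem $\MMVN$ via the basis expansion of Section~\ref{sec:redtommv}, identify $\rank{\Bs{\Gamma}}$ with $K$, and then invoke the MMV uniqueness threshold of Proposition~\ref{prop:cond_uniq_mmv}. Your necessity construction is virtually identical to the one the paper gives in Appendix~\ref{app:neccmmv} (you even handle the case $2k>L+K$, not just equality), and your parenthetical remark that no coordinate can vanish on all of $\ker(\mbf A_\Omega)$ is in fact always true here, not merely generically, by the spark assumption. The only substantive difference is that for sufficiency the paper simply cites \cite{wax_unique_1989,cotter_sparse_2005,jie_chen_theoretical_2006}, whereas you supply a self-contained null-space/spark argument; that argument is standard and correct, so this is added detail rather than a different strategy.
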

\vspace{-0.4cm}
%
Since $K\geq 1$, Theorem \ref{thm:uniqueness_IMV} guarantees exact recovery if $|\Gamma|\leq L/2$, and hence by $\area(M_\Gamma) = |\Gamma|/L$ (see Section \ref{sec:prform}), if $\SSF\leq 1/2$, 
which is the recovery threshold in Theorem \ref{th:sufficiency}. Recovery for $\SSF < 1$ will be discussed later. 
%
Sufficiency in Theorem \ref{thm:uniqueness_IMV} 
was shown in \cite[Prop. 1]{mishali_reduce_2008} and in the context of spectrum-blind sampling in \cite[Sec. 3, Thm. 3]{feng_universal_1997}. Necessity has not been proven formally before, but follows directly from known results, as shown in the proof of the theorem below.  

\begin{proof}[Proof of Theorem \ref{thm:uniqueness_IMV}]
The proof is based on the equivalence of $\IMV$ and $\MMVN$, established in the previous section, and on the following uniqueness condition for the MMV problem $\MMVN$. 
\begin{proposition}[
\cite{jie_chen_theoretical_2006,cotter_sparse_2005,wax_unique_1989,davies_rank_2010}] 
Let $\mtx{B}_{z} =  \mbf A_{\Gamma} \Bs{\Gamma}$ with $\rank{\Bs{\Gamma}} = K$. Then $\MMVN$ applied to $\mtx{B}_{z}$ recovers $(\Gamma, \Bs{\Gamma})$
if and only if
\begin{equation}
|\Gamma| < 
			\frac{L + K }{2}.
\label{eq:cond_uniq_mmv}
\end{equation}
\label{prop:cond_uniq_mmv}
\end{proposition}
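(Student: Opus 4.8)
The plan is to derive Proposition \ref{prop:cond_uniq_mmv} directly from two ingredients: the fact, established at the end of Section \ref{sec:suff}, that $\ve{c}$ is chosen so that $\spark(\mbf A_{\mbf c}) = L+1$ (i.e.\ every $L$ columns of $\mbf A_{\mbf c}$ are linearly independent), and a \emph{rank-aware} nullspace lemma. Throughout I write $k \defeq |\Gamma|$. A preliminary observation I would record first is that, since $k \le L < \spark(\mbf A_{\mbf c})$, the submatrix $\mbf A_\Gamma$ has full column rank; consequently $\rank{\mtx{B}_z} = \rank{\mbf A_\Gamma \Bs{\Gamma}} = \rank{\Bs{\Gamma}} = K$, and, more generally, \emph{every} feasible point $(\Gamma',\Bs{\Gamma'})$ of $\MMVN$ with $|\Gamma'|\le L$ must satisfy $\rank{\Bs{\Gamma'}} = K$ (because $\mbf A_{\Gamma'}$ is injective and must map onto the rank-$K$ matrix $\mtx{B}_z$).

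The key tool I would isolate as a lemma is the following: if $\mbf A_{\mbf c}\mtx{Z} = \mtx{0}$ with $\mtx{Z}\neq\mtx{0}$ whose nonzero rows are indexed by a set $T$ and with $\rank{\mtx{Z}}=\rho$, then $|T| \ge \spark(\mbf A_{\mbf c}) - 1 + \rho = L + \rho$. The proof is short: the $\rho$ independent columns of $\mtx{Z}$ give $\rho$ independent vectors in $\ker(\mbf A_T)$, so $\rank{\mbf A_T} \le |T|-\rho$; on the other hand any $\min(|T|,\spark-1)$ columns of $\mbf A_T$ are independent, so $\rank{\mbf A_T}\ge \min(|T|,L)$, and combining the two forces $|T|\ge L+\rho$. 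For sufficiency I would then suppose, toward a contradiction, that $\MMVN$ admits a second feasible point $(\Gamma',\Bs{\Gamma'})$ with $\Gamma'\neq\Gamma$ and $|\Gamma'|\le k$. Zero-padding both coefficient matrices to $\Sigma$ and setting $\mtx{Z}\defeq \Bs{\Gamma}-\Bs{\Gamma'}$ gives $\mbf A_{\mbf c}\mtx{Z}=\mtx{0}$, $\mtx{Z}\neq\mtx{0}$, with row support $T\subseteq \Gamma\cup\Gamma'$. The decisive step links $\rank{\mtx{Z}}$ to the rank constraint on the true solution: on the rows indexed by $\Gamma\setminus\Gamma'$ we have $\Bs{\Gamma'}=\mtx{0}$, so these rows of $\Bs{\Gamma}$ coincide with rows of $\mtx{Z}$ and hence span a space of dimension at most $\rank{\mtx{Z}}$, while the remaining nonzero rows of $\Bs{\Gamma}$ live in $\Gamma\cap\Gamma'$; therefore $K=\rank{\Bs{\Gamma}} \le \rank{\mtx{Z}} + |\Gamma\cap\Gamma'|$. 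Feeding this into the lemma yields $|\Gamma\cup\Gamma'| \ge |T| \ge L + \rank{\mtx{Z}} \ge L + K - |\Gamma\cap\Gamma'|$, and since $|\Gamma\cup\Gamma'|+|\Gamma\cap\Gamma'| = |\Gamma|+|\Gamma'| \le 2k$, I obtain $2k \ge L+K$, contradicting $k < (L+K)/2$.

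For necessity I would show the threshold is tight by constructing a non-identifiable instance whenever $k \ge (L+K)/2$. Pick a column index set $T\subseteq\Sigma$ with $|T| = L+K$; since $|T|>L$, the lemma's rank count gives $\dim\ker(\mbf A_T) = |T|-L = K$, so I can choose $\mtx{Z}$ whose $K$ columns form a basis of $\ker(\mbf A_T)$, i.e.\ $\rank{\mtx{Z}}=K$ and $\mbf A_{\mbf c}\mtx{Z}=\mtx{0}$, with (generically) all $L+K$ rows nonzero. Splitting $T$ into a set $\Gamma$ of size $k$ and its complement $\Gamma' \defeq T\setminus\Gamma$ of size $L+K-k \le k$, I set $\Bs{\Gamma}$ equal to $\mtx{Z}$ on the rows $\Gamma$ and $\Bs{\Gamma'}$ equal to $-\mtx{Z}$ on the rows $\Gamma'$. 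Then $\mbf A_{\mbf c}$ applied to either zero-padded matrix produces the same $\mtx{B}_z$, both have support size $\le k$, and (generically) $\rank{\Bs{\Gamma}} = K$, so $\MMVN$ cannot single out $(\Gamma,\Bs{\Gamma})$; this proves recovery fails and hence the bound cannot be relaxed.

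I expect the main obstacle to be exactly the inequality $\rank{\mtx{Z}} \ge K - |\Gamma\cap\Gamma'|$, i.e.\ ruling out the scenario where the two solutions differ by a low-rank (in particular rank-one) matrix: a naive spark argument only gives $|T|\ge L+1$ and is too weak for $K>1$. The resolution, and the conceptual heart of the proof, is that a small-rank difference would force the nonzero rows of the true coefficient matrix $\Bs{\Gamma}$ to be nearly collinear, contradicting the hypothesis $\rank{\Bs{\Gamma}}=K$; it is this interplay between the rank of the solution and the spark of $\mbf A_{\mbf c}$ that upgrades the recovery threshold from $L/2$ to $(L+K)/2$, and correspondingly from $\SSF \le 1/2$ toward $\SSF < 1$.
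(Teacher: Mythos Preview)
Your sufficiency argument via the rank-aware nullspace lemma is correct and is the standard proof from the MMV literature that the paper simply cites (\cite{wax_unique_1989,cotter_sparse_2005,jie_chen_theoretical_2006}) rather than reproduces; you have spelled out what the paper delegates to references. Your necessity construction---pick $T$ with $|T|=L+K$, take a rank-$K$ matrix $\mtx{Z}$ whose columns span $\ker(\mbf A_T)$, and split the rows into two pieces---is exactly the construction the paper gives in Appendix~\ref{app:neccmmv}.

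The one loose thread is the pair of parenthetical ``generically'' claims in your necessity step. The proposition's hypothesis $\rank{\Bs{\Gamma}}=K$ must hold for the constructed instance, not merely generically, and the paper closes this deterministically: since $\rank{\mtx{Z}}=K$, there exist $K$ linearly independent rows of $\mtx{Z}$; choose one side of the split to contain those $K$ row indices, padded to the required size with further indices from $T$ (possible since $k\ge(L+K)/2\ge K$), so the corresponding coefficient matrix has rank $K$ by construction. Your ``all $L+K$ rows nonzero'' can likewise be made deterministic: if the $i$-th coordinate vanished identically on $\ker(\mbf A_T)$, that kernel would embed into $\ker(\mbf A_{T\setminus\{i\}})$, which has dimension only $K-1$, a contradiction. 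With these two one-line fixes your argument is complete and coincides with the paper's.
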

\vspace{-0.4cm}
\begin{proof}[Proof of Proposition \ref{prop:cond_uniq_mmv}]
Sufficiency was proven in \cite[Thm. 1]{wax_unique_1989}, \cite[Lem. 1]{cotter_sparse_2005}, \cite[Thm. 2.4]{jie_chen_theoretical_2006}, necessity in \cite[Thm. 2]{davies_rank_2010}. We present a different, slightly simpler, argument for necessity in Appendix \ref{app:neccmmv}.
\end{proof}
In Section \ref{sec:redtommv}, we showed that $\dim \spanof \{ [\mbf z(t,f)]_p, p  =0,...,L-1 \} = K$ implies $\rank{\mtx{B}_z} = K$. The converse is obtained by essentially reversing the line of arguments used to prove this fact in Section \ref{sec:redtommv}.  
We have therefore established that $\dim \spanof \{ [\mbf z(t,f)]_p, p  =0,...,L-1 \} = \rank{\mtx{B}_z}$. 
Analogously, by using the fact that $\Bs{\Gamma}$ contains the expansion coefficients of\linebreak $\{s_{k,m}(t,f), \allowbreak (k,m)\in \Gamma\}$ in the basis $\{b_0(t,f),...,\allowbreak b_{K-1}(t,f)\}$, it can be shown that $\rank{\Bs{\Gamma}} = \dim \spanof \{s_{k,m}(t,f), \allowbreak (k,m)\in \Gamma\}$. 
It now follows, by application of Proposition \ref{prop:cond_uniq_mmv}, that $\MMVN$ correctly recovers the support set $\Gamma$ if and only if \eqref{eq:unique_imv} is satisfied. By equivalence of $\MMVN$ and $\IMV$,  $\IMV$ recovers the correct support set, provided that \eqref{eq:unique_imv} is satisfied. Once $\Gamma$ is known, $\mbf s_\Gamma(t,f)$ is obtained by solving \eqref{eq:sysofeq_rest}. 
\end{proof}

\subsection{Efficient algorithms for solving $\MMVN$ \label{sec:effrecalg}}

Solving the MMV problem $\MMVN$~is NP-hard \cite{davis_adaptive_1997}. Various alternative approaches with different performance-complexity tradeoffs are available in the literature. 
MMV-variants of standard algorithms used in single measurement sparse signal recovery, such as orthogonal matching pursuit (OMP) and $\ell_1$-minimization (basis-pursuit)  can be found in \cite{jie_chen_theoretical_2006,cotter_sparse_2005,tropp_algorithms_2006-1}. 
However, the performance guarantees available in the literature for these algorithms fall short of allowing to choose $|\Gamma|$ to be linear in $L$ as is the case in the threshold \eqref{eq:unique_imv}. 
A low-complexity algorithm that provably yields exact recovery under the threshold in \eqref{eq:unique_imv} is based on ideas developed in the context of subspace-based direction-of-arrival estimation, specifically on the MUSIC-algorithm \cite{schmidt_multiple_1986}. It was first recognized in the context of spectrum-blind sampling \cite{feng_universal_1997,bresler_spectrum-blind_2008} that a MUSIC-like algorithm can be used to solve a problem of the form $\IMV$. The algorithm described in \cite{feng_universal_1997,bresler_spectrum-blind_2008} implicitly first reduces the underlying infinite measurement vector problem to a (finite) MMV problem. 
Recently, a MUSIC-like algorithm and variants thereof were proposed \cite{lee_subspace_2012} to solve the MMV problem $\MMVN$  directly. 
As we will see below, this class of algorithms imposes conditions on $(\Gamma, \Bs{\Gamma})$ and will hence not guarantee recovery for all $(\Gamma, \Bs{\Gamma})$. We will present a (minor) variation of the MUSIC algorithm as put forward in \cite{schmidt_multiple_1986}, and used in the context of spectrum blind sampling \cite[Alg. 1]{bresler_spectrum-blind_2008}, in Section \ref{sec:almostall} below.
 
\section{Identification for almost all $H \in \opclassbr$ \label{sec:almostall}}
For $K > 1$, Theorem \ref{thm:uniqueness_IMV} hints at a potentially significant improvement over the worst-case threshold underlying Theorem \ref{th:almost_all} whose proof will be presented next. 
The basic idea of the proof is to show that $\IMV$ applied to $\mbf z(t,f) = \mbf A_{\Gamma}  \mbf s_\Gamma(t,f), \, (t,f) \in \Un$, recovers the correct solution if the set
\begin{equation}
\text{$\{s_{k,m}(t,f), (k,m) \in \Gamma\}$ is linearly independent on $\Un$}.
\label{eq:condlinind}
\end{equation}
\begin{proof}[Proof of Theorem \ref{th:almost_all}]
Condition \eqref{eq:condlinind} implies that $\dim \spanof \{s_{k,m}(t,f), (k,m)\in \Gamma\} = |\Gamma|$. Therefore, with $K=|\Gamma|$ in Theorem \ref{thm:uniqueness_IMV}, we get that $\IMV$ delivers the correct solution if
 $|\Gamma|<L$, i.e., if $|\Gamma|/L = \area(M_\Gamma) < 1$, which is guaranteed by $\area(M_\Gamma) \leq \SSF <1$. 
\end{proof}

We next present an algorithm that provably recovers $H \in \opsetaa$ with $\SSF <1$, i.e., almost all $H \in \opclassbr$ with $\SSF <1$. Specifically, this low-complexity MUSIC-like algorithm solves\footnote{Note that this does not contradict the fact that $\IMV$ is NP-hard (as noted before), since it ``only'' solves $\IMV$ for \emph{almost all} $\vect{s}(t,f)$. 
} $\MMVN$ (which is equivalent to $\IMV$) and can be shown to identify the support set $\Gamma$ correctly for $\SSF < 1$ provided that Condition \eqref{eq:condlinind} is satisfied. The algorithm is a minor variation of the MUSIC algorithm as put forward in \cite{schmidt_multiple_1986}, and used in the context of spectrum blind sampling \cite[Alg. 1]{bresler_spectrum-blind_2008}.

\begin{theorem}
The following algorithm recovers all $H \in \opsetaa$, provided that $\SSF <1$.
\begin{enumerate}
\setlength\itemindent{1cm}
\item[Step 1)] Given the measurement $\vect{z}(t,f)$, find a basis (not necessarily orthogonal) $\{b_0(t,f),...,\allowbreak b_{K-1}(t,f)\}$, for the space spanned by $\{[\mbf z(t,f)]_p, p=0,...,L-1\}$, where $K \defeq \dim \allowbreak\spanof \{[\mbf z(t,f)]_p, p=0,...,L-1 \}$, and determine the coefficient matrix $\mtx{B}_{z}$ in the expansion $\mbf z(t,f) =  \mtx{B}_{z} \vect{b}(t,f)$. 
\item[Step 2)]  Compute the matrix $\mbf U_n$ of eigenvectors of $\mtx{Z} \defeq \mtx{B}_{z} \herm{\mtx{B}}_{z}$ corresponding to the zero eigenvalues of $\vect{Z}$. 
\item[Step 3)] 
Identify $\Gamma$ with the indices corresponding to the columns of $\herm{\mbf U}_n \mbf A_{\mbf c}$ that are equal to $\mbf 0$. 
\end{enumerate}
\label{thm:music_recovery}
\end{theorem}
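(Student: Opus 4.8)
The plan is to recognize Steps 1--3 as the classical MUSIC procedure of \cite{schmidt_multiple_1986} and to show that, under Condition \eqref{eq:condlinind}, the ``noise subspace'' spanned by the columns of $\mbf U_n$ is \emph{exactly} the orthogonal complement of $\range(\mbf A_\Gamma)$. Once this is in place, the test in Step 3 amounts to asking which columns of $\mbf A_{\mbf c}$ lie in $\range(\mbf A_\Gamma)$, and correctness reduces to showing that these are precisely the columns indexed by $\Gamma$.

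First I would pin down the relevant subspace identities. Since $H \in \opsetaa$, Condition \eqref{eq:condlinind} holds, so $\dim\spanof\{s_{k,m}(t,f), (k,m)\in\Gamma\} = |\Gamma|$; by the rank identification carried out in Section \ref{se:uniqueness} this gives $\rank{\Bs{\Gamma}} = |\Gamma|$ for the coefficient matrix in \eqref{eq:sysofeq_incoeff}. As $\Bs{\Gamma}$ has size $|\Gamma| \times K$, this forces $K = |\Gamma|$ and renders $\Bs{\Gamma}$ square and invertible. Because $\area(M_\Gamma) \leq \SSF < 1$ we have $|\Gamma| < L$, so $\mbf A_\Gamma$ has full column rank by $\spark(\mbf A_{\mbf c}) = L+1$; combined with $\mtx{B}_z = \mbf A_\Gamma \Bs{\Gamma}$ and the invertibility of $\Bs{\Gamma}$, this yields $\range(\mtx{B}_z) = \range(\mbf A_\Gamma)$. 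Since $\mtx{Z} = \mtx{B}_z \herm{\mtx{B}}_z$ is Hermitian with $\range(\mtx{Z}) = \range(\mtx{B}_z)$, its zero-eigenvalue eigenvectors $\mbf U_n$ span $\range(\mbf A_\Gamma)^{\perp}$, as claimed.

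Next I would verify the selection rule. Writing $\mbf a_j$ for the $j$th column of $\mbf A_{\mbf c}$, we have $\herm{\mbf U}_n \mbf a_j = \mbf 0$ if and only if $\mbf a_j \in \range(\mbf A_\Gamma)$. For $j \in \Gamma$ this is immediate, since $\mbf a_j$ is a column of $\mbf A_\Gamma$. For $j \notin \Gamma$ I would invoke the spark: the set $\Gamma \cup \{j\}$ has cardinality $|\Gamma| + 1 \leq L$, so the corresponding columns of $\mbf A_{\mbf c}$ are linearly independent by $\spark(\mbf A_{\mbf c}) = L+1$, whence $\mbf a_j \notin \range(\mbf A_\Gamma)$ and $\herm{\mbf U}_n \mbf a_j \neq \mbf 0$. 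Hence Step 3 returns exactly $\Gamma$, after which $\mbf s_\Gamma(t,f)$---and thereby $\sfunc$---is recovered by solving the now fully determined system \eqref{eq:sysofeq_rest}, e.g.\ via \eqref{eg:endpiecwise}.

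The main obstacle, and the only place where strictness of $\SSF < 1$ enters, is the converse direction for $j \notin \Gamma$: excluding false positives rests entirely on pairing the spark bound against $|\Gamma| + 1 \leq L$. At $\SSF = 1$ one could have $|\Gamma| = L$, so that $\Gamma \cup \{j\}$ would comprise $L+1$ columns, which the spark condition no longer controls; this is exactly why the guarantee stops short of $\SSF \leq 1$. I would expect the remaining subtlety to be the forcing of $K = |\Gamma|$ out of Condition \eqref{eq:condlinind}, since it is the equality $K = |\Gamma|$---the largest possible signal-subspace dimension---that makes $\range(\mtx{B}_z)$ coincide with \emph{all} of $\range(\mbf A_\Gamma)$ and thereby prevents the algorithm from missing active cells.
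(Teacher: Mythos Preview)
Your proof is correct and follows essentially the same approach as the paper: both establish $\range(\mbf U_n) = \range(\mbf A_\Gamma)^\perp$ from the full-rank property of $\Bs{\Gamma}$ guaranteed by Condition \eqref{eq:condlinind}, and then invoke $\spark(\mbf A_{\mbf c}) = L+1$ together with $|\Gamma|+1 \leq L$ to rule out false positives in Step 3. The only cosmetic difference is that you first deduce $K = |\Gamma|$ and use invertibility of $\Bs{\Gamma}$ directly to get $\range(\mtx{B}_z)=\range(\mbf A_\Gamma)$, whereas the paper phrases the same conclusion through $\rank{\Bs{\Gamma}\herm{\Bs{\Gamma}}} = |\Gamma|$ and the eigendecomposition \eqref{eq:ev2}.
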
 

\begin{remark*}
In the remainder of the paper, we will refer to Steps $2)$ and $3)$ above as the MMV-MUSIC algorithm. As shown next, the MMV-MUSIC algorithm provably solves the MMV problem $\MMVN$ given that $\Bs{\Gamma}$ has full rank $|\Gamma|$. 
\end{remark*}

 \begin{proof}[Proof of Theorem \ref{thm:music_recovery}]
 The proof is effected by establishing that for $\SSF<1$ under Condition \eqref{eq:condlinind} 
 the support set $\Gamma$ is uniquely specified through the indices of the columns of $\herm{\mbf U}_n \mbf A_{\mbf c}$ that are equal to $\mbf 0$. 
To see this, we first obtain from \eqref{eq:sysofeq_incoeff} (where $\mtx{A}_{\Gamma}$ and $\mtx{B}_{\Gamma}$ are as defined in Section \ref{sec:reconstruct})
\begin{equation}
 \mtx{Z} =\mtx{B}_{z} \herm{\mtx{B}}_{z}= \mbf A_{\Gamma} \underbrace{\Bs{\Gamma} \herm{\mtx{B}}_{\Gamma}}_{\mtx{S}_\Gamma} \herm{\mtx{A}}_{\Gamma}.
\label{eq:mmvZSrel}
\end{equation}
Next, we perform an eigenvalue decomposition of $\mtx{Z}$ in \eqref{eq:mmvZSrel} to get
\begin{equation}
 \mtx{Z} = 
\begin{bmatrix}
\mbf U_z  & \mbf U_n 
\end{bmatrix}
  \begin{bmatrix} \mbf \Lambda_z & \mbf 0\\   \mbf 0 &  \mbf 0   \end{bmatrix}  
  \begin{bmatrix}
  \herm{\mbf U}_z \\
  \herm{\mbf U}_n
  \end{bmatrix}
  = \mbf U_z \mtx \Lambda_z \herm{\mbf U}_z
  =  \mbf A_{\Gamma} \mtx{S}_\Gamma \herm{\mtx{A}}_{\Gamma}
\label{eq:ev2}
\end{equation}
where $\mbf U_{z}$ contains the eigenvectors of $\mbf Z$ corresponding to the non-zero eigenvalues of $\mbf Z$. 
As mentiond in Section \ref{sec:suff}, each set of $L$ or fewer columns of $\mbf A_{\mbf c}$ is necessarily linearly independent, if $\vect{c}$ is chosen judiciously. Hence $\mbf A_\Gamma$ has full rank if $|\Gamma| \leq L$, which is guaranteed by $\SSF = |\Gamma|/L<1$. 
Thanks to Condition \eqref{eq:condlinind}, $\dim \spanof \{s_{k,m}(t,f), (k,m)\in \Gamma\} =  |\Gamma|$ and hence $\rank{\Bs{\Gamma}} = |\Gamma|$ (this was shown in the proof of Theorem \ref{thm:uniqueness_IMV}), which due to $\mtx{S}_\Gamma=\Bs{\Gamma} \herm{\mtx{B}}_{\Gamma}$ implies that $\rank{\mtx{S}_\Gamma} =|\Gamma|$. 
 Consequently, we have 
 \begin{equation}
 \range(\mbf A_{\Gamma}) = \range(\mbf A_{\Gamma}  \mbf S_\Gamma \herm{\mbf A}_{\Gamma})= \range(\mbf U_z \mbf \Lambda_z \herm{\mbf U}_{z}) = \range(\mbf U_z)
\label{eq:rankeq}
\end{equation}
where the second equality follows from \eqref{eq:ev2}. 
 $\range(\mbf U_n)$ is the orthogonal complement of $\range(\mbf U_z)$ in ${\mathbb C}^L$. It therefore follows from \eqref{eq:rankeq} that $\herm{\mbf U}_n \mbf A_\Gamma=\mbf 0$. 
Hence, the columns of $\herm{\mbf U}_n \mbf A_{\mbf c}$ that correspond to indices $(k,m) \in \Gamma$ are equal to $\mbf 0$. 

It remains to show that no other column of $\herm{\mbf U}_n \mbf A_{\mbf c}$ is equal to $\mbf 0$. This will be accomplished through proof by contradiction. 
Suppose that $\herm{\mbf U}_n \mbf a = \mbf 0$ where $\mbf a$  is any column of $\mbf A_{\mbf c}$ corresponding to an index pair $(k',m') \notin \Gamma$. 
Since  $\range(\mbf U_n)$ is the orthogonal complement of $\range(\mbf U_z)$ in ${\mathbb C}^L$, $\mbf a \in \range(\mbf U_z)= \range(\mbf A_{\Gamma})$. 
This would, however, mean that the $L$ or fewer columns of $\mbf A_{\mbf c}$ corresponding to the indices $(k,m) \in \{\Gamma \cup (k',m')\}$ would be linearly dependent, which stands in contradiction to the fact that each set of $L$ or fewer columns of $\mbf A_{\mbf c}$ must be linearly independent. 
\end{proof}

\section{Compressive system identification and discretization \label{sec:discior}}

\newcommand{\Tint}{V}
\newcommand{\Bint}{B}
\newcommand{\btlim}[1]{#1} 
\newcommand{\Dtau}{E} 	
\newcommand{\Dnu}{D}	
\newcommand{\Ud}{U_d}
\newcommand{\MMVD}[0]{(\text{P0$^*$})}	
\newcommand{\MMVE}[0]{(\text{P0$^\star$})}	

 \newcommand{\dsfapprox}[2]{ \hat{\sfunc}[#1,#2 ]}
 \newcommand{\dsfapproxs}[0]{ \hat{\sfunc}}

The results presented thus far rely on probing signals of infinite bandwidth and infinite duration. It is therefore sensible to ask whether identification under a bandwidth-constraint on the probing signal and under limited observation time of the operator's corresponding response is possible. We shall see that the answer is in the affirmative with the important qualifier of identification being possible up to a certain resolution limit (dictated by the time- and bandwidth constraints). 
The discretization through time- and band-limitation underlying the results in this section will involve approximations that are, however, not conceptual. 

The discussion in this section serves further purposes. First, it will show how the setups in \cite{bajwa_learning_2008,bajwa_identification_2011,herman_high-resolution_2009,pfander_identification_2008} can be obtained from ours through discretization induced by band-limiting the input and time-limiting and sampling the output signal. More importantly, we find that, depending on the resolution induced by the discretization, the resulting recovery problem can be an MMV problem. 
The recovery problem in \cite{bajwa_learning_2008,bajwa_identification_2011,herman_high-resolution_2009,pfander_identification_2008} is a standard (i.e., single measurement) recovery problem, but multiple measurements improve the recovery performance significantly, according to the recovery threshold in Theorem \ref{thm:uniqueness_IMV}, and are crucial to realize recovery beyond $\SSF = 1/2$. 
Second, we consider the case where the support area of the spreading function is (possibly significantly) below the identification threshold $\SSF \leq 1/2$, and we show that this property can be exploited to identify the system while undersampling the response to the probing signal. In the case of channel identification, this allows to reduce identification time, and in radar systems it leads to increased resolution.

\subsection{\label{dttblim} Discretization through time- and band-limitation}

Consider an operator $H \in \opclassbr$ and an input signal $\btlim{x}(t)$ that is band-limited to $[0,\Bint)$, and perform a time-limitation of the corresponding output signal $y(t) = (Hx)(t)$ to $[0,\Tint)$. Then, the input-output relation \eqref{eq:hilbertschmidt_s} becomes  (for details, see, e.g. \cite{boelcskei_fundamentals_2012})
\begin{equation}
y(t) \!\defeq\! (H\prsig)(t) \!=\! \frac{1}{B\Tint} \sum_{r\in \mb Z}\sum_{l \in \mb Z} \overline{\sfunc}\! \left(\frac{r}{B},\frac{l}{\Tint} \right) \btlim{\prsig} \!\left( t-\frac{r}{B}  \right) e^{j 2 \pi \frac{l}{\Tint} t}
\label{eq:sfunc_discrete}
\end{equation}
for $0\leq t\leq \Tint$, where 
\begin{align}
\overline{\sfunc}(\tau,\nu) \label{eq:btlimiorel} = 
B\Tint \!\! \int_{\nu'} \!\int_{\tau'} \!\!\! \sfunc(\tau',\nu')    \sinc( (\tau \!- \!\tau') B ) \sinc( (\nu \!-\! \nu')  \Tint  )  d\tau' d\nu' . \nonumber 
\end{align}
Band-limiting the input and time-limiting the corresponding output hence leads to a discretization of the input-output relation, with ``resolution'' $1/\Bint$ in $\tau$-direction and $1/\Tint$ in $\nu$-direction. It follows from \eqref{eq:btlimiorel} that for a compactly supported $\sfunc(\tau,\nu)$ the corresponding quantity $\overline{\sfunc}(\tau,\nu)$ will not be compactly supported. 
Most of the volume of $\overline{\sfunc}(\tau,\nu)$ will, however, be supported on $M_\Gamma +(-1/\Bint,1/\Bint) \times (-1/\Tint,1/\Tint)$, so that we can approximate \eqref{eq:sfunc_discrete} by restricting summation to the indices $(r,l)$ satisfying $(r/B, l/\Tint) \in M_\Gamma$. 
Note that the quality of this approximation depends on the spreading function as well as on the parameters $B,V,T$, and $L$. 
Here, we assume that $1/(TL) \leq 1/\Tint$ and $T \leq 1/\Bint$. These constraints are not restrictive as they simply mean that we have at least one sample per cell. 
We will henceforth say that $H$ is identifiable with resolution $(1/\Bint,1/\Tint)$, if it is possible to recover $\overline{\sfunc} \left(r/\Bint,l/\Tint \right)$, for $(r/B, l/\Tint) \in M_\Gamma$, from $y(t)$. 
We will simply say ``$H$ is identifiable'' whenever the resolution is clear from the context. In the ensuing discussion $\overline{\sfunc} \left(r/B,l/\Tint \right), r,l \in \mathbb Z$, is referred to as the discrete spreading function. 
The maximum number of non-zero coefficients of the discrete spreading function to be identified is $\area(M_\Gamma) \Bint \Tint$. 

Next, assuming that $\nu_{\max}$, as defined in \eqref{eq:suppgridcont}, satisfies $\nu_{\max} \ll B$, it follows that $y(t)$ is approximately band-limited to $[0,\Bint)$. From \cite{slepian_bandwidth_1976} we can therefore conclude that $\btlim{y}(t)$ lives in a $\Bint \Tint$-dimensional signal space (here, and in the following, we assume, for simplicity, that $\Bint \Tint$ is integer-valued), and can hence be represented through $\Bint \Tint$ coefficients in the expansion in an orthonormal basis for this signal space. The corresponding basis functions can be taken to be the prolate spheroidal wave functions \cite{slepian_bandwidth_1976}. 
\newcommand{\A}{\vect{A}}
\newcommand{\s}{\vect{s}}
\!Denoting the vector containing the corresponding expansion coefficients by $\vect{y}$, the input-output relation \eqref{eq:sfunc_discrete} becomes
\begin{equation}
\vect{y} = \A \s
\label{eq:syseqyAs}
\end{equation}
where the columns of $\A\in \complexset^{\Bint \Tint \times \Bint \Tint \tau_{\max} \nu_{\max}}$ contain the expansion coefficients of the time-frequency translates $\prsig \! \left( t-r/B  \right) e^{j 2 \pi \frac{l}{\Tint} t}$ in the prolate spheroidal wave function set, 
and $\s\in \complexset^{\Bint \Tint \tau_{\max} \nu_{\max}}$ contains the samples $\overline{\sfunc} \left(r/B,l/\Tint \right)$ for $(r/\Bint,l/\Tint) \in [0,\tau_{\max})\times [0,\nu_{\max})$, of which at most $\area(M_\Gamma) \Bint \Tint$ are non-zero, with, however, unknown locations in the $(\tau,\nu)$-plane. 
We next show that the recovery threshold $\SSF \leq 1/2$ continues to hold, independently of the choice of $\Bint$ and $\Tint$. 

\paragraph*{Necessity}
It follows from \cite[Cor. 1]{donoho_optimally_2003} that $\norm[0]{\s} \leq (\spark(\A)-1)/2$ is necessary to recover $\s$ from $\vect{y}$ given $\A$. 
 With $\norm[0]{\s} = \area(M_\Gamma) \Bint \Tint$ and $\spark(\A) \leq \min(\Bint \Tint, \Bint \Tint \tau_{\max} \nu_{\max} )+1 \leq  \Bint \Tint+1$, which follows trivially\footnote{Note that for $\A \in \complexset^{m\times n}$, we trivially have $\spark(\A) \leq \min(m,n)+1$.} since $\A$ is of dimension $\Bint \Tint \times \Bint \Tint \tau_{\max} \nu_{\max}$, we get $\area(M_\Gamma) \Bint \Tint \leq \Bint \Tint/2$ and hence $\area(M_\Gamma) \leq 1/2$. Since, by definition, 
 $\area(M_\Gamma) \leq \SSF$ we have shown that $\SSF \leq 1/2$ is necessary for identifiability.

\paragraph*{Sufficiency}
\newcommand{\D}{D}
Sufficiency will be established through explicit construction of a probing signal $x(t)$ and by sampling the corresponding output signal $y(t)$. 
Since $\btlim{y}(t)$ is (approximately) band-limited to $[0,\Bint)$, we can sample $\btlim{y}(t)$ at rate $\Bint$, which results in
\begin{align}
\btlim{y}\left(\frac{n}{\Bint}\right)  = \frac{1}{B\Tint} \sum_{r=0}^{\Bint \tau_{\max} -1} \sum_{l=0}^{\Tint \nu_{\max} -1}  \overline{\sfunc} \left(\frac{r}{B},\frac{l}{\Tint} \right) \btlim{\prsig} \left(\frac{n-r}{B}  \right) e^{j 2 \pi \frac{ln}{\Bint \Tint}} \label{eq:sfunc_discretesample}
\end{align}
for $n = 0,...,\Bint \Tint -1$. 
In the following, denote the number of samples of $\overline{\sfunc}(\tau,\nu)$ per cell $\Un + \left(k \csamp ,  m/(\csamp L)\right), (k,m)\in \Sigma$, in $\tau$-direction as $\Dtau$ and in $\nu$-direction as $\Dnu$; see Figure \ref{fig:disctnp} for an illustration. Note that, since $\Un = \left[ 0, \csamp  \right) \times \left[0, 1/ (\csamp L)  \right)$, and $\overline{\sfunc}(\tau,\nu)$ is sampled at integer multiples of
$1/\Bint$ in $\tau$-direction and of $1/\Tint$ in $\nu$-direction, we have $\Dtau=\Bint \csamp$ and $\Dnu = \Tint/(\csamp L)$. 
  The number of samples per cell $\Dtau \Dnu = \Bint \Tint / L$ will turn out later to equal the number of measurement vectors in the corresponding MMV problem. To have multiple measurements, and hence make identification beyond $\SSF=1/2$ possible, it is therefore necessary that $\Bint \Tint$ is large relative to $L$. 
As mentioned previously, the samples $\overline{\sfunc} \left(r/\Bint,l/\Tint \right)$, for $(r/B, l/\Tint) \in M_\Gamma$, fully specify the discrete spreading function. We can group these samples into the active cells, indexed by $\Gamma$, by assigning $\overline{\sfunc} \left((r +\Dtau k)/\Bint,(l+ \Dnu m)/\Tint \right)$, for $(r,l) \in \Ud$, to the cell with index $(k,l)$, where $(k,l) \in \Gamma$, and $\Ud \defeq   \{0,...,\Dtau-1\} \times \{0,...,\Dnu-1\}$. 

\begin{figure}
\centering
	\includegraphics{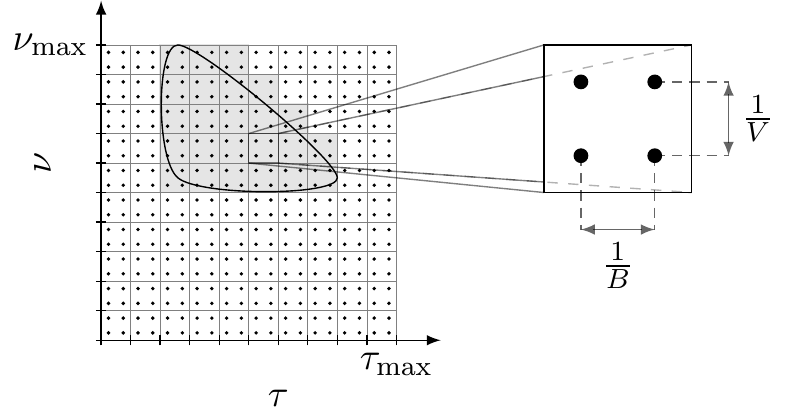}
\caption{\label{fig:disctnp} Discretization of the $(\tau,\nu)$-plane, with $\Dtau = \Dnu=2$.}
\end{figure}
The probing signal $x(t)$ is taken to be such that
\[
\btlim{\prsig}(m/\Bint)
= \begin{cases}
 c_{-k},\;  & \text{for }m = \Dtau k,\; \forall k \in \mathbb Z \\
 0, & \text{otherwise}
\end{cases}
\]
where the coefficients $c_k=c_{k+L}, \forall k \in \mathbb Z$, are chosen as discussed in Section \ref{sec:suff}. Note that the sequence $\btlim{\prsig}(m/\Bint)$ is $\Dtau L$-periodic. 
Algebraic manipulations yield the discrete equivalent of \eqref{eq:sysofeq} as
\begin{equation}
\btlim{\mbf z}[n,r] = \mbf A_{\mbf c} \btlim{\vect{s}}[n,r], \quad  (n,r)  \in \Ud.
\label{eq:syseqdisc}
\end{equation}
Here, $\mbf A_{\mbf c}$ was defined in \eqref{eq:defAc}, and 
\[
\left[\btlim{\mbf z}[n,r] \right]_p \defeq \btlim{z}_p[n,r]  e^{-j2\pi \frac{rp}{\Dnu L} }, \quad p = 0,..., L-1
\]
with
$
\btlim{z}_p[n,r] \defeq Z_{y}^{(\Dtau L,\Dnu )}[n+\Dtau p ,r] 
$,
where  
\begin{equation*}
	Z_{y}^{(\Dtau L,\Dnu)}[n,r]   \defeq  \frac{1}{\Dnu} \sum_{q  =  0}^{\Dnu-1}   y[n  + \Dtau L q]   
	e^{-j 2 \pi  \frac{qr}{\Dnu} }, 
\end{equation*}
for $0 \leq n \leq \Dtau L-1, \; 0\leq r \leq \Dnu-1$, 
 is the discrete Zak transform (with parameter $\Dtau L$) \cite{bolcskei_discrete_1997} of the sequence $y[n]$. For general properties of the discrete Zak transform we refer to \cite{bolcskei_discrete_1997}. 
Further, $\btlim{\mbf s}[n,r]  \defeq [\btlim{s}_{0,0}[n,r],\allowbreak \btlim{s}_{0,1}[n,r], \allowbreak ...,\allowbreak \btlim{s}_{0,L-1}[n,r], \btlim{s}_{1,0}[n,r],\allowbreak..., \btlim{s}_{L-1,L-1}[n,r] ]^T$ with 
 \begin{equation}
 \btlim{s}_{k,m}[n,r] \defeq    \overline{\sfunc} \!\left(\frac{n+\Dtau k}{\Bint },\frac{ r+\Dnu m}{\Tint} \right)   e^{j2 \pi \frac{n (r+\Dnu m)}{\Dtau \Dnu L}} 
 \label{def:skmdiscrete}
\end{equation}
for $\quad  (n,r) \in \Ud, \;(k,m) \in \Sigma$. 
Note that $\btlim{s}_{k,m}[n,r], (n,r)\in \Ud, (k,m) \in \Gamma$, fully specifies the discrete spreading function.

The identification equation \eqref{eq:syseqdisc}  can be rewritten as 
\begin{equation}
\mtx{Z} = \mbf A_{\mbf c} \mtx{S}
\label{eq:discZAS}
\end{equation}
where the columns of $\mtx{Z} \in \mbf \complexset^{L\times \Dtau \Dnu}$ and $\mtx{S} \in \complexset^{ L^2 \times \Dtau \Dnu}$ are given by the vectors $\btlim{\mbf z}[n,r]$ and $\btlim{\vect{s}}[n,r]$, respectively, $(n,r) \in \Ud$. Hence, each row of $\mtx{S}$ corresponds to the samples of $\overline{\sfunc}$ in one of the $L^2$ cells. Since the number of samples per cell, $\Dtau \Dnu$, is equal to the number of columns of $\mtx{S}$, we see that the number of samples per cell corresponds to the number of measurements in the MMV formulation \eqref{eq:discZAS}. 
 Denote the matrix obtained from $\mtx{S}$ by retaining the rows corresponding to the active cells, indexed by $\Gamma$, by $\mtx{S}_\Gamma$ and let $\mbf A_\Gamma$ be the matrix containing the corresponding columns of  $\mbf A_{\mbf c}$. Then \eqref{eq:discZAS} becomes
\begin{equation}
\mtx{Z} = \mbf A_{\Gamma} \mtx{S}_\Gamma. 
\label{eq:syseqdres}
\end{equation}
Once $\Gamma$ is known, \eqref{eq:syseqdres} can be solved for $\mtx{S}_\Gamma$. Hence, recovery of the discrete spreading function amounts to identifying $\Gamma$ from the measurements $\mtx{Z}$, which  
can be accomplished by solving the following MMV problem:
\begin{equation*}
\MMVD \; \begin{cases} 
		\text{minimize} 	& |\Gamma| \\
		\text{subject to} 	&\mtx{Z} = \mbf A_{\Gamma} \mtx{S}_\Gamma
\end{cases}
\end{equation*}
where the minimization is performed over all $\Gamma \subseteq \Sigma$ and all corresponding $\mtx{S}_\Gamma \in \complexset^{|\Gamma|\times \Dtau \Dnu}$. 
 It follows from Proposition \ref{prop:cond_uniq_mmv} that $\Gamma$ is recovered exactly from $\vect{Z}$ by solving $\MMVD$, whenever $|\Gamma| < (L+K)/2$, where $K = \rank{\mtx{S}_\Gamma}$. Correct recovery is hence guaranteed whenever $|\Gamma| \leq L/2$.  
 Since $|\Gamma|/L = \area(M_\Gamma)$ and $\area(M_\Gamma) \leq \SSF$, this shows that $\SSF \leq 1/2$ is sufficient for identifiability. 
 
As noted before, $\MMVD$ is NP-hard. However, if $\mtx{S}_\Gamma \text{ has full rank } |\Gamma|$
then MMV-MUSIC 
 provably recovers $\Gamma$ with $|\Gamma| < L$, i.e., when $\SSF < 1$ (this was shown in the proof of Theorem \ref{thm:music_recovery}). For $\mtx{S}_\Gamma \in \complexset^{|\Gamma|\times \Dtau \Dnu}$ to have full rank $|\Gamma|$, it is necessary that the number of samples satisfy $\Dtau \Dnu \geq |\Gamma|$. For $\Dtau \Dnu \geq |\Gamma|$ almost all $\mtx{S}_\Gamma$ have full rank $|\Gamma|$. 
The development above shows that the MMV aspect of the recovery problem is essential to get recovery for values of $\SSF$ beyond $1/2$. 

We conclude this discussion by noting that the setups in \cite{bajwa_learning_2008,pfander_identification_2008} in the context of channel estimation and in \cite{herman_high-resolution_2009} in the context of compressed sensing radar are structurally equivalent to the discretized operator identification problem considered here, with the important difference of the MMV aspect of the problem not being brought out in \cite{bajwa_learning_2008,herman_high-resolution_2009,pfander_identification_2008}.

\subsection{Compressive identification}

In the preceding sections, we showed 
under which conditions identification of an operator is possible if the operator's spreading function support region is not known prior to identification. 
We now turn to a related problem statement that is closer to the philosophy of sparse signal recovery, where the goal is to reconstruct sparse objects, such as signals or images, by taking fewer measurements than mandated by the object's ``bandwidth''. 
We consider the discrete setup \eqref{eq:sfunc_discretesample} and assume that $\SSF$ is (possibly significantly) smaller than the identifiability threshold $1/2$. Concretely, set $\SSF = P/(2L)$ for an integer $P\leq L$. 
We ask whether this property can be exploited to recover the discrete spreading function from a subset of the samples $\{y(n/\Bint)$, $n = 0,...,\Bint \Tint -1\}$ only. We will see that the answer is in the affirmative, and that the corresponding practical implications are significant, as detailed below.  

For concreteness, we assume that $\supp(\overline{\sfunc}) = M_\Gamma \subseteq M_\Phi$, with $\Phi = \{0,...,\lfloor \sqrt{L} \rfloor -1\} \times \{0,...,\lfloor \sqrt{L} \rfloor -1\}$. To keep the exposition simple, we take $\Dtau \Dnu=1$, in which case $\mtx{S}_\Phi$ becomes a vector. 
Note that, since $\area(M_\Phi) \leq 1$ (this follows from $\area(\Un) \lfloor \sqrt{L} \rfloor^2 = (1/L) \lfloor\sqrt{L} \rfloor^2 \leq 1$), the operator can be identified by simply solving $\vect{Z} = \vect{A}_{\Phi} \vect{S}_\Phi$ for $\vect{S}_\Phi$, which we will refer to as ``reconstructing conventionally''.  
Here $\vect{A}_{\Phi}$ and  $\vect{S}_\Phi$ contain the columns of $\vect{A}_{\vect{c}}$ and rows of $\vect{S}$, respectively, corresponding to the indices in $\Phi$. 

Since $\SSF = P/(2L)$, the area $\area(M_\Gamma)$ of the (unknown) support region $M_\Gamma$ of the spreading function  satisfies  $\area(M_\Gamma) \leq P/(2L)$. We  next show that the discrete spreading function can be reconstructed  from only $P$ of the $L$ rows of $\vect{Z}$. The index set corresponding to these $P$ rows is denoted as $\Omega$, and is an (arbitrary) subset of $\{0,...,L-1\}$ (of cardinality $P$). Let $\vect{Z}^\Omega$ and $\A_{\vect{c}}^\Omega$ be the matrices corresponding to the rows of $\vect{Z}$ and $\A_{\vect{c}}$, respectively, indexed by $\Omega$. The matrix $\vect{Z}^\Omega$ is a function of the samples $\{\btlim{y}\left( n/\Bint \right)\colon n \in \Omega\}$ only; hence, reconstruction from $\vect{Z}^\Omega$ amounts to reconstruction from an undersampled version of $y(t)$. 
Note that we cannot reconstruct the discrete spreading function by simply inverting $\A_{\Phi}^\Omega \in \complexset^{P \times \lfloor \sqrt{L} \rfloor^2}$ since $\A_{\Phi}^\Omega$ is a wide matrix\footnote{The special case $L \geq P\geq  \lfloor \sqrt{L} \rfloor^2$ is of limited interest and will not be considered.}. 
Next, \eqref{eq:discZAS} implies (see also \eqref{eq:syseqdres}) that
\[
\vect{Z}^\Omega = \mbf A_{\mbf c}^\Omega \mtx{S} = \mbf A_{\Gamma}^\Omega \mtx{S}_\Gamma.
\]
Theorem 4 in \cite{lawrence_linear_2005} establishes that for almost all $\mbf c$, $\spark( \vect{A}_{\mbf c}^\Omega ) = P$. Hence, according to Proposition \ref{prop:cond_uniq_mmv}, $\mtx{S}_\Gamma$ can be recovered uniquely from $\vect{Z}^\Omega$ provided that $|\Gamma| \leq P/2$ and hence $\SSF \leq P/(2L)$, by solving 
\begin{equation*}
\MMVE \; \begin{cases} 
		\text{minimize} 	& |\Gamma| \\
		\text{subject to} 	& \vect{Z}^\Omega =  \mbf A_{\Gamma}^\Omega \mtx{S}_\Gamma
\end{cases}
\end{equation*}
where the minimization is performed over all $\Gamma \subseteq \Sigma$ and all corresponding $\mtx{S}_\Gamma \in \complexset^{|\Gamma|}$. 

We have shown that identification from an undersampled observation $y(t)$ is possible, and the undersampling factor can be as large as $P/L$. A similar observation has been made in the context of radar imaging \cite{baraniuk_compressive_2007}. 
Recovery of $\mtx{S}_\Gamma$ from $\vect{Z}^\Omega$ has applications in at least two different areas, namely in radar imaging and in channel identification. 

\paragraph*{Increasing the resolution in radar imaging}
In radar imaging, targets correspond to point scatterers with small dispersion in the delay-Doppler plane. Since the number of targets is typically small, the corresponding spreading function is sparsely supported \cite{herman_high-resolution_2009}. 
In our model, this corresponds to a small number of the $\overline{\sfunc} \left(\frac{r}{B},\frac{l}{\Tint} \right)$ in  \eqref{eq:sfunc_discretesample} being non-zero. 
Take $\Omega = \{0,...,P-1\}$. The discussion above then shows that, since only the samples $\{y(n/\Bint), n \in \Omega\}$, which in turn only depend on $y(t)$ for  $t\in [0,\Tint P/L]$, are needed for identification, it is possible to identify the discrete spreading function from the ``effective'' observation interval $[0,\Tint  P/L)$, while keeping the resolution in $\nu$-direction at $1/\Tint$. 
If we were to reconstruct conventionally, given only the observation of $y(t)$ over the interval $[0,\Tint P/L)$, the induced resolution in $\nu$-direction (see Figure \ref{fig:disctnp}) would only be $L/(P\Tint)$. 

\paragraph*{Saving degrees of freedom in channel identification}
Next, consider the problem of channel identification, and take again $\Omega = \{0,...,P-1\}$. 
As discussed before, $\vect{Z}^\Omega$ is a function of the samples $\{\btlim{y}\left( n/\Bint \right)\colon n \in \Omega\}$ only, which, by careful inspection of \eqref{eq:sfunc_discretesample}, are seen to depend only on $\{\btlim{x}(n/\Bint), \allowbreak n = -(\lfloor \sqrt{L} \rfloor -1),..., P-1\}$. 
 We can therefore conclude that it suffices to observe $y(t)$ over the interval $[0,\Tint  P/L)$. 
Conceptually, this means that the time needed to identify (learn) the channel is reduced, which leaves, e.g., more degrees of freedom to communicate over the channel.

\section{Numerical results \label{sec:numerical_recres}}

We present numerical results quantifying the impact of additive noise and of the choice of $\vect{c}$ on the performance of different identification algorithms. Specifically, we consider the discrete setting\footnote{We consider the discrete setting as any numerical simulation of the continuous setting will involve a discretization.} \eqref{eq:sfunc_discretesample} and evaluate two probing sequences. The first one is obtained by sampling i.i.d. uniformly from the complex unit disc, the resulting sequence is denoted by $\vect{c}_r$. Since for almost all $\vect{c}_r$, each $L\times L$ submatrix of $\mtx{A}_{\vect{c}}$ has full rank for $L$ prime \cite[Thm. 4]{lawrence_linear_2005}, $\vect{c}_r$ will allow recovery for all operators with $\SSF \leq 1/2$ and for almost all operators with $\SSF < 1$, in both cases, with probability one, with respect to the choice of $\vect{c}_r$. 
The second probing sequence is the Alltop sequence\footnote{
The Alltop sequence was also used in \cite{herman_high-resolution_2009} as probing sequence, motivated by the fact that its mutual coherence attains the Welch lower bound (for $L$ prime). 
}
\cite{alltop_complex_1980}, denoted by $\vect{c}_a$, and defined as
\[
[\vect{c}_{a}]_i = \frac{1}{\sqrt{L}} e^{\frac{j2\pi }{L} i^3}, \quad i = 0,...,L-1. 
\]
We compare two different algorithms for solving the MMV problem $\MMVD$, namely the MMV-orthogonal matching pursuit (MMV-OMP) algorithm as proposed in \cite{jie_chen_theoretical_2006}, and MMV-MUSIC\footnote{
In the noisy case, MMV-MUSIC identifies the columns with $\ell_2$-norm smaller than a certain threshold, which in turn depends on the noise level.} as introduced in Section \ref{sec:almostall}. 
We generate the samples $\overline{\sfunc} \left(\frac{r}{B},\frac{l}{\Tint} \right)$ at random, as follows. 
We choose\footnote{
The reason for choosing $L=19$ is that we want $L$ to be prime, as by \cite{lawrence_linear_2005} this guarantees that for almost all $\vect{c}$, each $L\times L$ submatrix of $\mtx{A}_{\vect{c}}$ has full rank.
}  
$L=19$, and vary the support set size $\SSF = |\Gamma|/L$ and the number of samples per cell, $\Dtau \Dnu$. 
For fixed $\SSF = |\Gamma|/L$, and hence fixed $|\Gamma|$, we draw $\Gamma \subseteq \Sigma$ uniformly at random from the set of all support sets with cardinality $|\Gamma|$, 
and assign i.i.d. $\jpg(0,1)$ values to each of the $\Dtau \Dnu$ samples in each of the corresponding cells. 

To analyze the impact of noise, we contaminate the measurement (i.e., $y(n/\Bint)$ in \eqref{eq:sfunc_discretesample}) by i.i.d. Gaussian noise. Recovery performance in the noisy case is quantified through the empirical relative squared error in the discrete spreading function, abbreviated as ERE, which is the empirical expectation of the relative squared error. 
In the noiseless setting, recovery success is declared if the relative squared error in the spreading function is less than or equal to $10^{-5}$. Recovery probabilities and the ERE were obtained from 1000 realizations of $\Gamma$.

\setcounter{paragraph}{0}
\paragraph{Impact of probing sequence}
The results for the noiseless case, depicted in Figure \ref{fig:emprec}, show that the probing sequences $\vect{c}_a$ and $\vect{c}_r$ perform almost equally well. 
We can see, as predicted by Theorem \ref{thm:music_recovery}, that MMV-MUSIC succeeds for $\SSF<1$, provided that $\Dtau \Dnu/L \geq |\Gamma|/L = \SSF$. Specifically, as shown in the proof of Theorem \ref{thm:music_recovery}, MMV-MUSIC succeeds if $\mtx{S}_\Gamma$ has full rank, which is the case with probability one if $\Dtau \Dnu/L \geq |\Gamma|/L = \SSF$, as the entries of $\mtx{S}_\Gamma$ are  i.i.d.  $\jpg(0,1)$. For $\Dtau \Dnu < |\Gamma|$, MMV-MUSIC fails. The performance of MMV-OMP improves in $\Dtau \Dnu$; however, the improvement stagnates at about $\Delta \approx 1/2$. For $\Dtau \Dnu=1$, MMV-OMP outperforms MMV-MUSIC, for all other values of $\Dtau \Dnu$ considered MMV-MUSIC outperforms MMV-OMP. 
\begin{figure}
\centering
\includegraphics{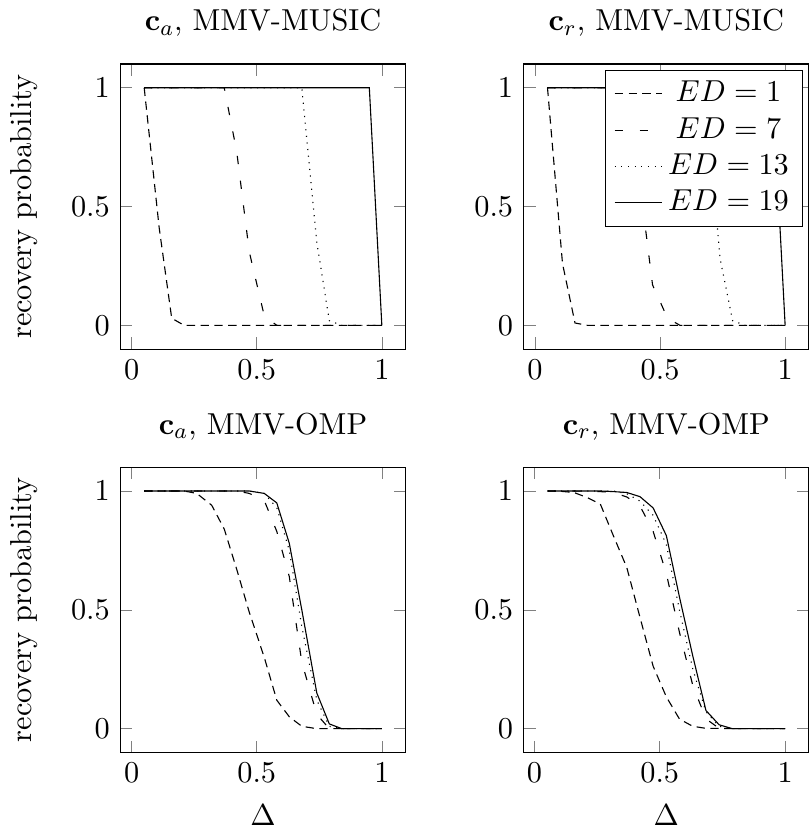}
\caption{\label{fig:emprec} Recovery probabilities for the Alltop sequence and for a randomly generated sequence, for the MMV-MUSIC algorithm in Section \ref{sec:almostall} and the MMV-OMP \cite{jie_chen_theoretical_2006} algorithm. 
}
\end{figure}

\paragraph{Impact of noise}
The results depicted in Figure \ref{fig:noisy} show that the identification process exhibits noise robustness up to $\SSF \approx 1$. When $\Dtau \Dnu/L \geq |\Gamma|/L = \SSF$, the error in recovering the spreading function is small for both identification algorithms, but MMV-MUSIC outperforms MMV-OMP significantly. 
The results in Figure \ref{fig:noisy_2} quantify the noise sensitivity of MMV-MUSIC and MMV-OMP. 


\begin{figure}
\centering
\includegraphics{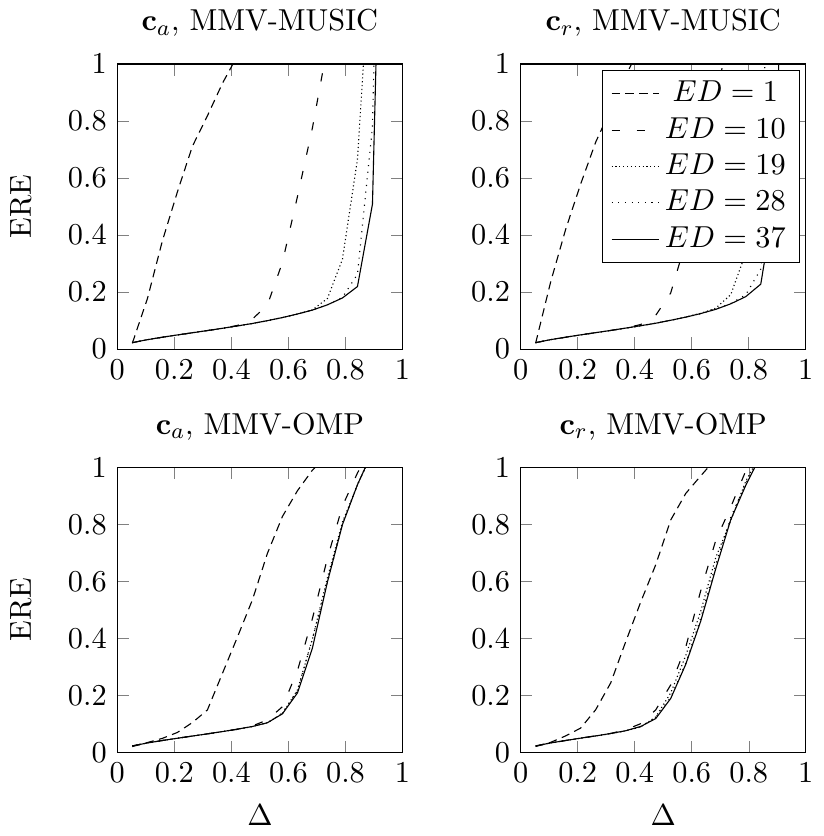}
\caption{\label{fig:noisy} ERE for the Alltop sequence and for a randomly chosen sequence, obtained by MMV-MUSIC and MMV-OMP \cite{jie_chen_theoretical_2006} at $\text{SNR} = 20\text{dB}$.}
\end{figure}

\begin{figure}[]
\begin{center}
\includegraphics{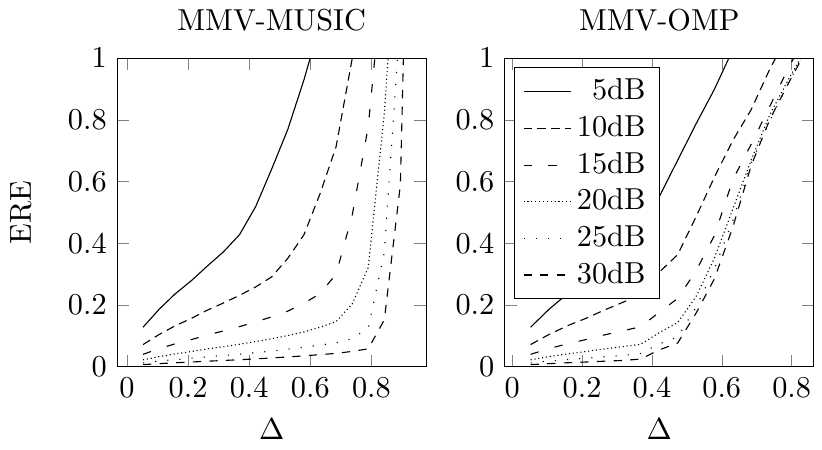}
\end{center}
\caption{\label{fig:noisy_2} 
ERE for a randomly chosen sequence, and $\Dtau \Dnu=19$, obtained by MMV-MUSIC and  MMV-OMP \cite{jie_chen_theoretical_2006} for different SNR values.}
\end{figure}

\section*{Acknowledgment}
The authors would like to thank G. Pfander and V. Morgenshtern for interesting discussions. 

\appendices

\section{Bounded inverse of $T$ \label{app:binv}}

\begin{theorem}
The inverse 
\begin{align}
T^{-1} \colon R_T \to \opclassgen
\label{eq:definv}
\end{align}
of the linear operator
\[
T\colon \opclassgen \to \opto
\]
where $R_T$ is the range of $T$, exists and is bounded if and only if $T$ is bounded below, in the following sense: There exists an $\alpha>0$ such that for all $H_1,H_2 \in \opclassgen$, 
\begin{equation}
\alpha \norm[\opclass]{H_1 -H_2} \leq \norm[]{ T H_1 - T H_2  }.
\label{eq:boundedinversecond}
\end{equation}
\end{theorem}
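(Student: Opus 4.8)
The plan is to prove this as the standard functional-analytic equivalence ``$T$ is bounded below $\iff$ $T$ has a bounded (equivalently, continuous) inverse on its range,'' carried out in both directions. The only point requiring care is that $\opclassgen$ need not be a linear subspace, so throughout I would phrase everything in terms of \emph{pairwise differences} $H_1-H_2$ rather than operator norms on a subspace; this is precisely the form in which the lower bound \eqref{eq:boundedinversecond} is stated, and it matches the two-sided formulation in Definition \ref{def:defstableidentification}. Accordingly, I would interpret ``$T^{-1}$ is bounded'' as the Lipschitz estimate $\norm[]{T^{-1}y_1 - T^{-1}y_2} \leq C \norm[]{y_1-y_2}$ for all $y_1,y_2 \in R_T$, which for a linear map is equivalent to the usual norm bound and is exactly the notion of continuity needed in the paper.

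For the direction ``bounded below $\implies$ inverse exists and is bounded,'' I would first extract injectivity directly from \eqref{eq:boundedinversecond}: if $T H_1 = T H_2$, then $\alpha \norm[\opclass]{H_1-H_2} \leq \norm[]{TH_1 - TH_2} = 0$, and since $\alpha>0$ this forces $H_1 = H_2$. Hence $T$ is a bijection onto $R_T$ and $T^{-1}\colon R_T \to \opclassgen$ is well defined. For boundedness, given $y_1,y_2 \in R_T$ I would write $y_i = T H_i$ with $H_i = T^{-1} y_i$ and apply \eqref{eq:boundedinversecond} to obtain $\norm[\opclass]{T^{-1}y_1 - T^{-1}y_2} = \norm[\opclass]{H_1 - H_2} \leq \tfrac{1}{\alpha}\norm[]{TH_1 - TH_2} = \tfrac{1}{\alpha}\norm[]{y_1 - y_2}$, so $T^{-1}$ is Lipschitz with constant $1/\alpha$ and therefore bounded and continuous.

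For the converse, ``inverse exists and is bounded $\implies$ bounded below,'' I would start from the Lipschitz estimate $\norm[\opclass]{T^{-1}y_1 - T^{-1}y_2} \leq C\norm[]{y_1 - y_2}$ on $R_T$, substitute $y_i = T H_i$ for arbitrary $H_1,H_2 \in \opclassgen$, and use $T^{-1} T H_i = H_i$ (valid since $T^{-1}$ is the inverse of $T$) to get $\norm[\opclass]{H_1 - H_2} \leq C \norm[]{TH_1 - TH_2}$. Setting $\alpha = 1/C > 0$ recovers \eqref{eq:boundedinversecond}, completing the equivalence.

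I do not expect a serious obstacle here; the argument is essentially the classical one. The only genuinely delicate step is conceptual rather than computational: because $\opclassgen$ is not assumed to be a subspace, I must consistently work with differences and Lipschitz continuity instead of invoking linearity of $T$ on a subspace or an operator-norm bound of the form $\norm[\opclass]{T^{-1}y} \leq C\norm[]{y}$. Making sure the two notions of ``bounded'' coincide in this setting---and that the inverse relation $T^{-1}TH = H$ is applied only to elements actually lying in $\opclassgen$ and $R_T$---is where I would be most careful.
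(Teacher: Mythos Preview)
Your proposal is correct and follows essentially the same argument as the paper's proof: both directions are handled via the standard Lipschitz/bounded-below equivalence, with injectivity extracted from \eqref{eq:boundedinversecond} and boundedness of $T^{-1}$ obtained by substituting $y_i = TH_i$ and reading the inequality in reverse. You correctly flag the one non-standard point the paper also highlights, namely that $\opclassgen$ need not be a subspace, which is why everything must be phrased in terms of differences $H_1-H_2$ and Lipschitz continuity rather than a single-argument operator-norm bound.
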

\begin{proof}
The proof corresponding to the case where $\opclassgen$ satisfies $(H_1 - H_2) \in \opclassgen$ for all $H_1,H_2 \in \opclassgen$ is standard, see e.g. \cite{naylor_linear_2000}. For $(H_1-H_2) \notin \opclassgen$, the proof follows the same steps with minor modifications. 
We first show that \eqref{eq:boundedinversecond} implies bounded invertibility of $T$. If $T H_1 = T H_2$, then from \eqref{eq:boundedinversecond}
\[
\alpha \norm[\opclass]{H_1 -H_2} \leq \norm[]{0}
\] 
and hence necessarily $H_1 = H_2$, which shows that $T$ is injective. Since according to \eqref{eq:definv}, the domain of the inverse is $R_T$, $T$ is also surjective, and hence $T$ is invertible. To show boundedness of $T^{-1}$, set $H_1 = T^{-1} y_1$ and $H_2 = T^{-1} y_2$ for $y_1,y_2 \in R_T$. Using \eqref{eq:boundedinversecond}, we get
\begin{align*}
\norm[]{  y_1 -  y_2}   
= \norm[]{ T  T^{-1} y_1 - T T^{-1} y_2} = \norm[]{ T H_1 - T H_2 } 
\geq \alpha \norm[\opclass]{H_1 -H_2} = \alpha \norm[\opclass]{T^{-1} y_1 - T^{-1} y_2}
\end{align*}
which is
\[
\norm[\opclass]{T^{-1} y_1 - T^{-1} y_2} \leq \frac{1}{\alpha}  \norm[]{  y_1 -  y_2} 
\]
and hence shows that $T^{-1}$ is bounded. 

We next show that bounded invertibility of $T$ implies \eqref{eq:boundedinversecond}. 
Since $T^{-1}$ exists and is bounded, we have, for $\alpha>0$, 
\begin{align*}
\norm[\opclass]{H_1 -H_2} 
= \norm[\opclass]{T^{-1} y_1 - T^{-1} y_2}  
\leq \frac{1}{\alpha} \norm[]{  y_1 -  y_2}  = \frac{1}{\alpha} \norm[]{ T H_1 - T H_2 }.
\end{align*}
\end{proof}

\section{\label{app:lem2}Proof of Lemma \ref{le:boundedness_classr}}
Starting from \eqref{eq:sysofeq_rest}, we get, for fixed $(t,f)\in \Un$,
\begin{align}
 \inf_{\norm[2]{\mbf v} = 1} \, \norm[2]{\mbf A_{\Gamma} \mbf v }  \; \norm[2]{\mbf s_\Gamma (t,f)}   
 \leq   \norm[2]{\mbf z (t,f)}
\leq    \sup_{\norm[2]{\mbf v} = 1} \norm[2]{\mbf A_{\Gamma} \mbf v} \; \norm[2]{\mbf s_\Gamma (t,f)} .
 \label{eq:upperlower}
\end{align}
 Squaring $\norm[2]{\mbf z (t,f)}$ and integrating over $\Un$ yields 
\begin{align}
\int_{\Un}  \norm[2]{\mbf z (t,f)}^2 d(t,f) 
&= \sum_{p = 0}^{L-1} \int_{\Un}  \left| [\mbf z(t,f)]_p \right|^2  d(t,f)  \nonumber \\
&= \sum_{p = 0}^{L-1} \int_{\Un} (\csamp L)^2 \left| z_p(t,f)  \right|^2  d(t,f)   \label{eq:usedefzv} \\
&= (\csamp L)^2  \int_0^{\csamp L}\!\!\!\! \int_0^{1/(\csamp L)}  |\mc Z_{y} (t,f) |^2 d(t,f) \label{eq:usedefzp} \\
&=  \csamp L \, \norm[]{H \prsig}^2
\label{eq:hzeq}
\end{align}
where we used \eqref{eq:defzvect} and \eqref{eq:defpz} for \eqref{eq:usedefzv} and \eqref{eq:usedefzp}, respectively, and \eqref{eq:hzeq} follows since the Zak transform is an isometry (see \eqref{eq:unitarityzak}). 
Similarly, based on \eqref{def:skm} we get 
\begin{equation}
\int_{\Un}   \norm[2]{\mbf s_\Gamma (t,f)}^2 \, d(t,f)  = \norm[]{\sfunc}^2 = \norm[\opclass]{H}^2
\label{eq:normheqints}
\end{equation}
where the last equality follows from \eqref{eq:defHSnorm}. 
Combining \eqref{eq:normheqints} and \eqref{eq:hzeq}  with \eqref{eq:upperlower} yields
\[
\alpha_\Gamma  \norm[\opclass]{H}  \leq   \, \norm[]{H \prsig} \leq   \beta_\Gamma  \norm[\opclass]{H}
\]
 with 
 \[
 \alpha_\Gamma = \frac{1}{\sqrt{TL}} \inf_{\norm[2]{\mbf v} = 1}\norm[2]{ \mbf A_{\Gamma} \mbf v}, \quad \beta_\Gamma = \frac{1}{\sqrt{TL}} \sup_{\norm[2]{\mbf v} = 1}\norm[2]{ \mbf A_{\Gamma} \mbf v}
 \]
 which concludes the proof.

\section{Proof of Proposition \ref{prop:cond_uniq_mmv} \label{app:neccmmv}}

\newcommand{\sparsity}{M}
To prove necessity of \eqref{eq:cond_uniq_mmv}, we show that one can construct a solution $(\Gamma',\Bs{\Gamma'}) \neq (\Gamma, \Bs{\Gamma})$ to $\MMVN$ applied to $ \mtx{B}_{z} = \mbf A_{\Gamma} \Bs{\Gamma}$ with $|\Gamma'| = |\Gamma| \geq (L+K)/2$. 
For any set $\Phi$ of column indices of $\mtx{A}_{\vect{c}}$, with cardinality $|\Phi| = L+K$,  we have that 
$\mtx{A}_\Phi $ has full rank $L$, as each set of $L$ columns of $\mtx{A}_{\vect{c}} \in \complexset^{L\times L^2}$ is linearly independent (as discussed previously, according to \cite[Thm. 4]{lawrence_linear_2005} this holds for almost all $\vect{c}$, and we assume that $\vect{c}$ is chosen accordingly), and hence $\dim \ker (\mtx{A}_\Phi ) = K$. We can therefore conclude that there exists a matrix $\Bs{\Phi} \in \complexset^{(L+K)\times K}$ with $\rank{\Bs{\Phi}} = K$ such that 
\begin{equation}
\mtx{A}_\Phi \Bs{\Phi} = \mtx{0}. 
\label{eq:absphi0}
\end{equation}
We next construct index sets $\Gamma,\Gamma'$ with $\Gamma \cup \Gamma' = \Phi$ and $|\Gamma| = |\Gamma'| = (K+L)/2$. 
Since $\rank{\Bs{\Phi}} = K$, there exists a set of $K$ linearly independent rows of $\Bs{\Phi}$. Let $\Gamma'$ be the index set corresponding to these rows augmented by the indices corresponding to  $(K+L)/2  -K$ arbitrary rows of $\Bs{\Phi}$, and set $\Gamma= \Phi \setminus  \Gamma'$.  
By construction, the matrix formed by the rows indexed by $\Gamma'$, $\Bs{\Gamma'}$, satisfies $\rank{\Bs{\Gamma'}}=K$.  
From \eqref{eq:absphi0}, with $\Bs{\Gamma}$ defined through $\Bs{\Phi} = \transp{[\transp{\mtx{B}}_{\Gamma'}, -\transp{\mtx{B}}_{\Gamma}  ]}$, we have
\begin{equation}
\begin{bmatrix}
\mtx{A}_{\Gamma'}  & \mtx{A}_{ \Gamma}
\end{bmatrix}
\begin{bmatrix}
 \Bs{\Gamma'} \\
 -\Bs{\Gamma}  
\end{bmatrix}
 = \mtx{0}  \iff
 \mtx{A}_{\Gamma'}    \Bs{\Gamma'} =    \mtx{A}_{\Gamma}  \Bs{\Gamma}. 
\label{eq:ABeqAB2}
\end{equation}
It therefore follows from \eqref{eq:ABeqAB2} that ($\Gamma',\Bs{\Gamma'}$) is consistent with $\mtx{B}_{z} =  \mbf A_{\Gamma} \Bs{\Gamma} = \mbf A_{\Gamma'} \Bs{\Gamma'}$, which concludes the proof. 





\end{document}